%% file: neurips_2026.tex
\pdfoutput=1
\documentclass{article}

\usepackage[preprint]{neurips_2026}

\usepackage[utf8]{inputenc}
\usepackage[T1]{fontenc}
\usepackage{url}
\usepackage{booktabs}
\usepackage{amsfonts}
\usepackage{microtype}
\usepackage{xcolor}
\usepackage{graphicx}
\usepackage{subcaption}
\usepackage{placeins}
\usepackage{amsmath}
\usepackage{amssymb}
\usepackage{amsthm}
\usepackage{hyperref}
\usepackage{tikz}
\usetikzlibrary{calc,shadows,arrows.meta}
\usepackage{tcolorbox}

\definecolor{Ink}{HTML}{0B0B0B}
\definecolor{Muted}{HTML}{898781}
\definecolor{Hair}{HTML}{E1E0D9}
\definecolor{SeriesBlue}{HTML}{2A78D6}
\definecolor{colModel}{RGB}{70,130,180}
\definecolor{colPriv}{RGB}{76,153,96}
\definecolor{colSignal}{RGB}{180,140,70}
\definecolor{colNeutral}{RGB}{110,110,110}
\newtcolorbox{contributionsbox}{
  colback=SeriesBlue!4, colframe=SeriesBlue!70,
  boxrule=0pt, leftrule=2.5pt,
  arc=1pt, left=8pt, right=8pt, top=5pt, bottom=5pt, boxsep=0pt,
  before skip=6pt, after skip=6pt}

\theoremstyle{definition}
\newtheorem{definition}{Definition}[section]
\newtheorem{assumption}[definition]{Assumption}

\theoremstyle{plain}
\newtheorem{theorem}[definition]{Theorem}

\newtheorem{proposition}[definition]{Proposition}

\theoremstyle{remark}

\newcommand{\K}{\mathcal{K}}
\newcommand{\ctop}{c^{\top}}
\newcommand{\cbot}{c^{\bot}}
\newcommand{\coord}{\kappa}
\newcommand{\rmax}{r_{\max}}
\newcommand{\Cfg}{\mathcal{C}}
\newcommand{\Prin}{\mathcal{P}}
\newcommand{\Reach}{\mathrm{Reach}}
\newcommand{\Sep}{\mathsf{Sep}}
\newcommand{\Ret}{\mathsf{Ret}}
\newcounter{subfig}[figure]

\newcommand{\sublabel}[1]{\refstepcounter{subfig}\label{#1}}
\newcommand{\Sepb}{\Sep^{\mathrm{beh}}}
\newcommand{\Sepr}{\Sep^{\mathrm{rep}}}
\newcommand{\Wsp}{\mathcal{W}}

\workshoptitle{Foundations of Language Model Security}
\title{Capability-Gated Language Models:\\ Security Composes, Utility Does Not}

\author{%
  Patrikas Vanagas \\
  BPTI\\
  Vilnius, Lithuania \\
  {\footnotesize\texttt{pv.unwired170@passmail.net}} \\
  \And
  Augustas Mačijauskas \\
  Independent Researcher \\
  Vilnius, Lithuania \\
  {\footnotesize\texttt{august.macijauskas@gmail.com}} \\
  \And
  Laurynas Lopata \\
  askEarth AG\\
  Zürich, Switzerland \\
  {\footnotesize\texttt{lopatalaurynas@gmail.com}} \\
}

\input{figures/fig_mechanism_numbers}

\makeatletter
\@ifundefined{MPscratchCnt}{\input{supp-pdf.mkii}}{}
\makeatother

\begin{document}

\maketitle

\begin{abstract}
Deployed language model safeguards (safety fine-tuning, filtering,
unlearning) vary by principal only \textit{outside} the model weights:
filters are reconfigured, tiers are multiplied, and artefacts are
reissued; inside one set of weights every request meets the same
model configuration.
This motivates us to define
\emph{capability-gated deployment}: per-principal access control inside one
set of weights, whose configurations form a lattice --- meets accumulate a
principal's restrictions and joins pool a coalition's reach. We instantiate it
by sparse rank gating over an existing nested-factorisation mechanism,
guide profile search with one-pass attribution, and read every result once from
a pre-registered held-out split.
Security composes: provably at meets under a monotone-elicitation assumption we falsify pointwise. In two lineages the median held-out meet deepens suppression; the one effect surviving correction strengthens it. Utility does not: individually harmless profiles can compose to retention and fluency damage, and no compositional bound exists.

\end{abstract}

\section{Introduction}

A deployed language model is one endpoint and one configuration: a capability withheld
from an anonymous user is withheld from a vetted defender in the same deployment.
Protection theory has held since the 1970s that authority should instead be indexed to a
principal, so that a right can be granted, narrowed, or revoked for one subject without
altering it for every other \citep{lampson1974protection, graham1972protection}. The
mismatch is newly consequential for language models: frontier capability is rising on a
steep, well-characterised trend \citep{kwa2026measuringaiabilitycomplete},
dangerous-capability evaluation has become a precondition of release
\citep{shevlane2023modelevaluationextremerisks, phuong2024evaluatingfrontiermodelsdangerous},
and the domains at issue -- biology, chemistry, offensive cybersecurity
\citep{li2024wmdpbenchmarkmeasuringreducing} -- are exactly those in which different
principals warrant different capability.

The safeguards that ship are global by construction: post-training alignment shapes one
output distribution for every request thereafter
\citep{ouyang2022traininglanguagemodelsfollow, bai2022constitutionalaiharmlessnessai,
rafailov2024directpreferenceoptimizationlanguage}; unlearning and concept erasure edit the
weights, but the edited weights are the single artefact every user receives
\citep{li2024wmdpbenchmarkmeasuringreducing, belrose2025leaceperfectlinearconcept,
meng2023locatingeditingfactualassociations}; and classifiers mediate every request against
a policy beside the weights -- reconfigurable per tenant, but only over what the weights
still emit. Per-principal variation is thus bought outside any one set of weights --
reconfigured filters, tiered and identity-verified access, restricted releases,
managed-access frameworks [\mbox{\citealp{carter2026managed}},
\mbox{\citealp{bloomfield2026biological}}] -- at the
granularity of whole accounts and whole models: each distinct authority is a distinct
filter policy over intact capability, or a distinct artefact to train, evaluate, and
maintain \citep{roland2026modularpretrainingenablesaccess}. The caveat is known --
filtering leaves the underlying computation reachable, and unlearning is repeatedly found
to suppress rather than erase, the knowledge recoverable by fine-tuning or probing
\citep{lucki2025adversarialperspectivemachineunlearning,
deeb2025unlearningmethodsremoveinformation}: accuracy at chance is evidence about
behaviour, not about what remains reachable.

\begin{figure}[t]
\centering
\newcommand{\subl}[1]{{\fontsize{6}{7}\selectfont\color{black!55}#1}}
\newcommand{\notef}{\fontsize{6}{7}\selectfont}
\tikzset{hero/.style={
  x=1cm, y=1cm,
  blk/.style={rectangle, draw, rounded corners=1.5pt, font=\scriptsize, inner sep=2.2pt,
              align=center, thick},
  ttl/.style={font=\scriptsize\sffamily\bfseries, text=black!62},
  ann/.style={font=\notef\sffamily, text=black!52},
  ar/.style={-{Latex[length=1.3mm]}, draw=black!45, line width=0.4pt},
  arf/.style={-{Latex[length=1.3mm]}, draw=black!30, line width=0.4pt,
              dash pattern=on 1.6pt off 1.2pt},
  shad/.style={drop shadow={shadow xshift=0.4pt, shadow yshift=-0.4pt, opacity=0.12}},
  el/.style={circle, draw=Hair, fill=white, inner sep=1.6pt},
  mk/.style={circle, draw=colPriv, line width=0.9pt, fill=colPriv!18, inner sep=1.6pt},
  ed/.style={draw=Hair, line width=0.5pt},
  lab/.style={font=\scriptsize, text=colPriv!75!black, inner sep=1.5pt},
  note/.style={font=\notef, text=Muted, inner sep=1.2pt},
}}
\scalebox{0.97}{\begin{tikzpicture}[hero]
\path[use as bounding box] (-0.10,-2.12) rectangle (9.40,2.10);
\filldraw[fill=black!2, draw=black!10, line width=0.3pt, rounded corners=4pt] (-0.10,-2.12) rectangle (9.40,2.10);
\node[blk, fill=colNeutral!10, draw=colNeutral!70] (pre) at (0.62,1.45)
  {pretrained\\[-1pt]\subl{$\theta_0$}};
\node[blk, fill=colModel!16, draw=colModel, shad] (adapt) at (2.50,1.45)
  {adaptation \textit{(once)}\\[-1pt]\subl{every $c$ a usable model}};
\node[blk, fill=colModel!16, draw=colModel, shad] (wts) at (5.00,1.45)
  {weights $\theta$ $+$ interface\\[-1pt]\subl{$(\K,\{L_\coord\},T)$}};
\node[blk, fill=colPriv!16, draw=colPriv, shad] (srch) at (7.60,1.45)
  {profile search over $\Cfg$\\[-1pt]\subl{one $\varphi(u)$ per principal}};
\draw[ar] (pre) -- (adapt);  \draw[ar] (adapt) -- (wts);  \draw[ar] (wts) -- (srch);
\node[ttl, anchor=west] at (2.50,0.56) {Serving: identity binds profile; weights never change};
\foreach \y/\name/\cid in {0.18/internal/1, -0.46/cleared/2, -1.10/public/3} {
  \node[blk, fill=colNeutral!10, draw=colNeutral!70, minimum width=1.0cm] (p\cid) at (0.55,\y) {\name};
}
\node[blk, fill=colSignal!16, draw=colSignal, shad, minimum height=1.72cm] (gw) at (2.25,-0.46)
  {gateway\\[1pt]$u \mapsto \varphi(u)$};
\foreach \cid in {1,2,3} { \draw[ar] (p\cid) -- (gw); }
\foreach \y/\hh in {0.18/{1,1,1,1,1,1}, -0.46/{1,0.45,1,1,1,0.70}, -1.10/{1,0.45,0.55,1,0.25,0.70}} {
  \begin{scope}[shift={(3.30,\y)}]
    \foreach \h [count=\i from 0] in \hh {
      \draw[fill=colPriv!9, draw=colPriv!30, line width=0.2pt]
        (\i*0.135,-0.16) rectangle (\i*0.135+0.095,0.16);
      \draw[fill=colPriv!65, draw=colPriv!65, line width=0.2pt]
        (\i*0.135,-0.16) rectangle (\i*0.135+0.095,-0.16+\h*0.32);
    }
  \end{scope}
  \draw[ar] (gw.east) -- (3.27,\y);
  \draw[ar] (4.16,\y) -- (4.74,\y);
}
\node[ann, anchor=north, align=center] at (3.70,-1.40)
  {$\varphi(u)$: one bar per coordinate, height $=$ level};
\node[blk, fill=colModel!16, draw=colModel, shad, minimum height=1.72cm, minimum width=2.50cm]
  (mdl) at (6.05,-0.46) {};
\node[font=\scriptsize] at (6.05,-0.18) {one set of weights $\theta$};
\node[blk, fill=colPriv!22, draw=colPriv, minimum width=2.20cm, font=\notef] at (6.05,-0.96)
  {$T_{\varphi(u)}$ in forward pass};
\node[blk, fill=colNeutral!8, draw=colNeutral!60, minimum width=1.56cm, inner xsep=1pt] (o1) at (8.46,0.18)
  {$\pi_{\theta,\,\ctop}$};
\node[blk, fill=colNeutral!8, draw=colNeutral!60, minimum width=1.56cm, inner xsep=1pt] (o2) at (8.46,-0.46)
  {$\pi_{\theta,\,c_{\mathrm{bio}}}$};
\node[blk, fill=colNeutral!8, draw=colNeutral!60, minimum width=1.56cm, inner xsep=1pt] (o3) at (8.46,-1.10)
  {$\pi_{\theta,\,c_{\mathrm{bio}}\!\wedge c_{\mathrm{cyber}}}$};
\foreach \cid in {1,2,3} { \draw[ar] (mdl.east |- o\cid) -- (o\cid); }
\draw[arf] (srch.south) -- (7.60,0.84) -- (2.25,0.84) -- (gw.north);
\node[ann, anchor=center, fill=white, inner sep=1.5pt] at (5.05,0.84) {profiles installed at the gateway};
\draw[black!14, line width=0.6pt] (0.02,-1.78) -- (9.28,-1.78);
\node[ann, anchor=west] at (0.0,-1.98)
  {\textcolor{colModel}{$\blacksquare$}~model \quad
   \textcolor{colPriv}{$\blacksquare$}~privilege / enforcement \quad
   \textcolor{colSignal}{$\blacksquare$}~allocation \quad
   \textcolor{colNeutral}{$\blacksquare$}~data};
\end{tikzpicture}}%
\hspace{0.32cm}%
\scalebox{0.97}{\begin{tikzpicture}[hero]
\path[use as bounding box] (9.50,-2.12) rectangle (13.88,2.10);
\filldraw[fill=black!2, draw=black!10, line width=0.3pt, rounded corners=4pt] (9.50,-2.12) rectangle (13.88,2.10);
\begin{scope}[shift={(11.05,-1.45)}]
  \foreach \i in {0,1,2}{\foreach \j in {0,1,2}{
    \coordinate (n\i\j) at ({(\i-\j)*0.55},{(\i+\j)*0.76});}}
  \foreach \i/\j/\ii/\jj in {0/0/1/0, 0/0/0/1, 1/0/2/0, 1/0/1/1, 0/1/1/1, 0/1/0/2,
                             2/0/2/1, 1/1/2/1, 1/1/1/2, 0/2/1/2, 2/1/2/2, 1/2/2/2}
    {\draw[ed] (n\i\j) -- (n\ii\jj);}
  \foreach \p in {00,10,01,20,02} {\node[el] at (n\p) {};}
  \node[mk] (meet) at (n11) {};
  \node[mk] (cb)   at (n21) {};
  \node[mk] (cc)   at (n12) {};
  \node[mk] (top)  at (n22) {};
  \node[lab, anchor=west, align=left] at ($(top)+(0.16,0)$)
    {$\ctop = c_{\mathrm{bio}} \vee c_{\mathrm{cyber}}$\\[-2pt]{\notef\color{Muted}reached by coalition of both}};
  \node[lab, anchor=south west, inner sep=1pt] at ($(cb)+(0.06,0.08)$) {$c_{\mathrm{bio}}$};
  \node[lab, anchor=south east, inner sep=1pt] at ($(cc)+(-0.06,0.08)$) {$c_{\mathrm{cyber}}$};
  \node[lab, anchor=north, inner sep=0pt] (wedge) at ($(meet)+(0,-0.15)$) {$\wedge$};
  \node[lab, anchor=base east, inner sep=0pt] at ($(wedge.base west)+(-0.06,0)$) {$c_{\mathrm{bio}}$};
  \node[lab, anchor=base west, inner sep=0pt] at ($(wedge.base east)+(0.06,0)$) {$c_{\mathrm{cyber}}$};
  \node[note, anchor=north] at ($(meet)+(0,-0.35)$) {one principal, both suppressed};
  \node[lab, anchor=west] at ($(n00)+(0.16,0)$) {$\cbot$ {\notef\color{Muted}minimum capability}};
\end{scope}
\draw[ar, draw=Muted] (13.70,-1.15) -- (13.70,1.05);
\node[note, rotate=90, anchor=south] at (13.70,0) {capability};
\end{tikzpicture}}
\caption{\small \textbf{Capability-gated deployment.} Left (a): adaptation is paid once and yields
a single set of weights with the interface of Definition~\ref{def:interface}; a search over
$\Cfg$ then yields one profile per principal, which the gateway binds to identity and the
enforcer applies inside the forward pass. Right (b): profiles are points of a lattice: composing
two descends to their meet (one principal, both restrictions); pooling two principals'
outputs ascends to their join (a coalition's reach). Utility constraints in \eqref{eq:alloc}
bind at meets, security constraints at joins.}
\label{fig:hero}\sublabel{fig:hero-a}\sublabel{fig:hero-b}
\end{figure}

What is missing is thus not the will to differentiate but a per-principal mechanism inside
one set of weights and an algebra for it, since authority is useful
because it composes: a principal under two restrictions holds their conjunction, and
principals pooling outputs command their union. The shape of that algebra is classical --
Denning's lattice model combines security requirements by greatest lower bound
\citep{denning1976lattice}, and non-interference says what unavailability means
\citep{goguen1982security} -- and so is the debt: such properties are notoriously not
preserved under composition \citep{mccullough1988noninterference}, and which strengthened
conditions survive, at what price, is literature of its own
\citep{mclean1994general, mantel2002composition}. LLM capability profile lattice inherits both.

\begin{contributionsbox}
We define \emph{capability-gated deployment} (Fig.~\ref{fig:hero-a}): per-principal access
control inside one set of weights, varying only one privilege level per control coordinate.
Configurations form a finite distributive lattice: a principal under several restrictions
holds the meet, a coalition pooling outputs commands the join (Fig.~\ref{fig:hero-b}). We instantiate it by sparse rank gating over
a nested factorisation, steer profile search with a one-pass attribution, and read every
result once from a pre-registered held-out split --- an A/B discipline portable
to any capability-control evaluation. Security composes at meets (provably
under a monotone-elicitation assumption we then falsify pointwise, and empirically in two
lineages) while utility does not: individually harmless profiles compose to
real retention and fluency damage, and no compositional bound is possible.
\end{contributionsbox}

\textbf{Scope.} All claims assume query access (outputs observed at one's own
configuration, identity bound at the gateway, weights untouched) in expectation
over a declared elicitation distribution and against a declared probe class; the
gap from these expectation-based guarantees to worst-case,
differential-privacy-style separation is the open problem we pose.

\section{Capability-Gated Language Models}

What \emph{algebraic} object serves many principals from one deployment, and what survives
composition?

\subsection{Formal Definition of Capability-Gated Deployment}
\label{sec:formal}
\textbf{The Configuration Lattice.} The weights are fixed; deployment varies only \emph{where}, and \emph{how far},
privilege is removed. We fix that action space abstractly, so the algebra belongs to
the interface.

\begin{definition}[Capability-gating interface]\label{def:interface}
A \emph{capability-gating interface} for a model with parameters $\theta$ is a triple
$(\K, \{L_\coord\}_{\coord \in \K}, T)$, where $\K$ is a finite set of \emph{control
coordinates} (sites whose privilege may be set independently) and each $\coord \in \K$
carries a finite, totally ordered set $L_\coord$ of \emph{privilege levels}. A
\emph{configuration} assigns a level to every coordinate, i.e.\ is an element of
$\Cfg := \prod_{\coord \in \K} L_\coord$; we write $\ctop$ for the configuration granting the
greatest level everywhere and $\cbot$ for the least. The \emph{enforcer} $T$ carries a
configuration to effective parameters $\theta(c) := T_c(\theta)$ and thus to a deployed policy
$\pi_{\theta,c} := \pi_{\theta(c)}$, with $T_{\ctop}(\theta) = \theta$. Configurations are
ordered pointwise,
\begin{equation}
  c \preceq c' \quad :\Longleftrightarrow \quad c(\coord) \le c'(\coord)
  \ \ \text{for every } \coord \in \K ,
  \label{eq:order}
\end{equation}
and the interface is required to be \emph{monotone in reachability}: writing $\Reach(c)$ for
the set of computations realisable under $\theta(c)$,
\begin{equation}
  c \preceq c' \quad \Longrightarrow \quad \Reach(c) \subseteq \Reach(c') .
  \label{eq:mono}
\end{equation}
\end{definition}

Equation~\eqref{eq:mono} gives the order its content: ascending it can only enlarge what the
model may compute, and \eqref{eq:order} compares two configurations only when one dominates
the other at every coordinate. Under \eqref{eq:order}, $(\Cfg, \preceq)$ is a finite
distributive lattice, bounded by $\cbot$ and $\ctop$, with
$(c \wedge c')(\coord) = \min\{c(\coord), c'(\coord)\}$ and
$(c \vee c')(\coord) = \max\{c(\coord), c'(\coord)\}$ (Theorem~\ref{thm:lattice},
App.~\ref{app:proofs}; Fig.~\ref{fig:hero-b}).

\textbf{Principals, Profiles, and Allocation.} Deployment is not a model but a service: one set of weights answers to several parties whose
authorities differ. $\Prin$ is a finite set of \emph{principals} --- users, endpoints, or
API keys. Each $u \in \Prin$ carries a \emph{suppress set} $S_u$ of capability domains
that must be unavailable to $u$ and a \emph{retain set} $R_u$ of tasks on which $u$'s
utility must be preserved. \emph{Allocation} is a map $\varphi : \Prin \to \Cfg$;
$\varphi(u)$ is $u$'s \emph{profile}, bound to the principal at the gateway and enforced
inside by $T_{\varphi(u)}$ (Fig.~\ref{fig:hero-a}). Principals hold query access only:
each sees outputs at its own profile and alters neither it nor $\theta$.

\textbf{Two Phases.} The weights are adapted once; the profiles are found afterwards. \emph{Adaptation} is a
one-time procedure returning $\theta$ together with an interface satisfying
Definition~\ref{def:interface}, under which \emph{every} configuration is a usable model. \emph{Profile synthesis} then searches $\Cfg$, per principal, for a
configuration meeting that principal's requirements. Only the latter repeats when a
principal is added, and it is the hard one: $|\Cfg| = \prod_{\coord} |L_\coord|$ is
exponential in $|\K|$, so synthesis measures only a shortlist. It needs an
\emph{attribution} (effect of each level of each coordinate on each domain, predicted
from the mechanism's internals at cost independent of $|\Cfg|$) which the interface
does not provide and an instantiation must.

\begin{definition}[Capability-gated deployment]\label{def:cgd}
Fix an interface as in Definition~\ref{def:interface} and a family
$\mathfrak{T} \subseteq 2^{\Prin}$ of coalitions the threat model admits. Let $\Sep(c, D)$
be a predicate that domain $D$ is \emph{separated} under $c$, $\Ret(c, R)$ that task set
$R$ is \emph{retained}, and $\gamma_R(c)$ the \emph{collateral} incurred on $R$;
\S\ref{sec:modularity} gives them content. \emph{Capability-gated deployment} solves
\begin{equation}
\begin{aligned}
  \min_{\varphi \,:\, \Prin \to \Cfg} \ \ & \sum_{u \in \Prin} \gamma_{R_u}\big(\varphi(u)\big)
  \\[-1pt]
  \text{s.t.} \ \ & \Sep\big(\varphi(u), D\big) \ \ \forall u \in \Prin,\ \forall D \in S_u,
    \qquad \Ret\big(\varphi(u), R_u\big) \ \ \forall u \in \Prin, \\[-1pt]
  & \Sep\Big(\textstyle\bigvee_{u \in T} \varphi(u),\ D\Big) \ \ \forall T \in \mathfrak{T},
    \ \forall D \in \textstyle\bigcap_{u \in T} S_u .
\end{aligned}
\label{eq:alloc}
\end{equation}
\end{definition}

The two constraint families sit on opposite sides of the lattice (Fig.~\ref{fig:hero-b}).
A principal obliged to
forgo several domains at once holds the meet of the corresponding profiles, so its utility is
assessed \emph{below} each of them; a coalition pooling outputs commands whatever any member
commands, so its security must be assessed at the join, \emph{above} them all. Utility is
therefore evaluated at meets and security at joins --- an
asymmetry \eqref{eq:alloc} is arranged to expose rather than to hide.

\subsection{Security and Utility Under Composition}
\label{sec:modularity}

\begin{definition}[Channels, separation, retention]\label{def:seppred}
A configuration $c$ is observed at two channels. Its \emph{behavioural} channel is what
it emits, summarised on a domain $D$ with elicitation distribution $\mathcal{D}_D$ and
chance level $\alpha_D$ as
$\mathrm{acc}_D(c) = \mathbb{E}_{x \sim \mathcal{D}_D}\!\big[\mathrm{acc}_{\pi_c}(x)\big]$;
its \emph{representational} channel is the state $h_c(x)$ at the decision, read by a
declared class $P$ of probes from representations to answers. With
$\gamma_R(c) = \mathrm{acc}_R(\ctop) - \mathrm{acc}_R(c)$ the collateral on a retain set
$R$ and corresponding tolerances $\varepsilon, \delta, \tau \ge 0$,
\predisplaypenalty=0
\begin{equation}
\begin{aligned}
  \Sepb_\varepsilon(c, D) &\ :\Longleftrightarrow\
    \mathrm{acc}_D(c) \le \alpha_D + \varepsilon
    &&\textnormal{(behavioural)} \\
  \Sepr_\delta(c, D) &\ :\Longleftrightarrow\
    \sup_{p \in P}\ \mathbb{E}_{x \sim \mathcal{D}_D}\!\big[\mathrm{acc}_p(h_c(x))\big]
    \le \alpha_D + \delta
    &&\textnormal{(representational)} \\
  \Ret_\tau(c, R) &\ :\Longleftrightarrow\
    \gamma_R(c) \le \tau .
    &&\textnormal{(retentional)}
\end{aligned}
\label{eq:predicates}
\end{equation}
\end{definition}

Separation is one schema read at two channels that differ in who may decode. In
\eqref{eq:alloc}, $\Sep$ is $\Sepb_\varepsilon \wedge \Sepr_\delta$ at a principal's own
configuration and $\Sepb_\varepsilon$ alone at joins, where coalitions hold outputs only;
$\Ret$ is $\Ret_\tau$. An allocation meeting these constraints for $u$ is \emph{secure}
for $u$ and \emph{preserves its utility}. Separation fails by a capability's
\emph{persistence}, retention by its \emph{loss}. With the model's own readout in $P$,
$\Sepr_\delta \Rightarrow \Sepb_\delta$ and not conversely (Proposition~\ref{prop:hier}).

The \emph{masking gap} $\mu(c, D) = \delta^\star - \varepsilon^\star$ between the tightest
tolerances at which each holds tells a \emph{gate}, which removes the capability, from a
\emph{filter}, which withholds its output with the capability intact --- attaining $\Sepb$
with $\mu$ maximal.

Direction in the lattice decides what survives composition (proofs in
Appendix~\ref{app:proofs}). Under a monotone-elicitation assumption
(Assumption~\ref{ass:mea}) (which \S\ref{sec:results} finds violated only pointwise, by
about a point) security composes: separation
survives every further meet (Theorem~\ref{thm:seccomp}). Coalition pooling outputs
realises the join of its members' profiles, so \eqref{eq:alloc} imposes separation there (Theorem~\ref{thm:join}). Collateral admits no compositional
bound: zero-collateral profiles can meet to lose every retained item under exactly
additive margins, since accuracy thresholds a margin (Theorem~\ref{thm:nobound}).

Nothing so far has named a mechanism, and nothing needed to: any construction offering
independently addressable sites with ordered, reachability-nested settings instantiates
Definition~\ref{def:interface}, and everything above applies to it.

\subsection{Instantiating the Interface}
\label{sec:instantiation}

\begin{figure}[!t]
\centering
\begin{tikzpicture}[
  x=1cm, y=1cm,
  ttl/.style={font=\tiny\sffamily\bfseries, text=black!62},
  ann/.style={font=\tiny\sffamily, text=black!52},
  ar/.style={-{Latex[length=1.3mm]}, draw=black!45, line width=0.4pt},
]

\node[ttl, anchor=west] at (-0.10,2.58) {One control coordinate\,: a gated map $\coord$};
\draw[fill=colPriv!40, draw=colPriv!80!black, line width=0.5pt] (0.30,0.55) rectangle (0.66,2.05);
\draw[fill=colNeutral!8, draw=colNeutral!55, line width=0.4pt, dash pattern=on 1.4pt off 1.1pt]
  (0.66,0.55) rectangle (1.10,2.05);
\node[ann, anchor=east] at (0.24,1.30) {$B$};
\node[font=\tiny, text=colPriv!45!black] at (0.48,0.72) {$r$};
\draw[fill=colPriv!40, draw=colPriv!80!black, line width=0.5pt] (1.38,1.87) rectangle (2.88,2.05);
\draw[fill=colNeutral!8, draw=colNeutral!55, line width=0.4pt, dash pattern=on 1.4pt off 1.1pt]
  (1.38,1.65) rectangle (2.88,1.87);
\node[ann, anchor=west] at (2.92,1.85) {$A$};
\node[font=\tiny, anchor=north west, text=black!62] at (0.28,0.34)
  {$W(r)=B_{:,\,1:r}\,A_{1:r,\,:}$};
\node[font=\tiny, anchor=north west, text=black!52] at (0.28,-0.04)
  {$\mathrm{Im}\,W(1)\subseteq\cdots\subseteq\mathrm{Im}\,W(\rmax)$};
\draw[fill=colPriv!9, draw=colPriv!35, line width=0.3pt] (3.72,0.55) rectangle (3.94,2.05);
\draw[fill=colPriv!65, draw=colPriv!65] (3.72,0.55) rectangle (3.94,1.23);
\node[ann, anchor=north, align=center] at (3.83,0.40) {as a bar:\\height $=r$};

\draw[black!14, line width=0.5pt] (4.46,-0.55) -- (4.46,2.42);

\node[ttl, anchor=west] at (4.55,2.30) {One adaptation step\,: two forwards, anchor $+$ one sampled variant};
\node[ann, anchor=east] at (7.60,1.81) {anchor $\ctop$};
\begin{scope}[shift={(7.75,1.60)}]
  \foreach \i in {0,...,11} {
    \draw[fill=colPriv!9, draw=colPriv!35, line width=0.2pt] (\i*0.21,0) rectangle (\i*0.21+0.15,0.42);
    \draw[fill=colPriv!65, draw=colPriv!65] (\i*0.21,0) rectangle (\i*0.21+0.15,0.42);
  }
  \node[font=\tiny, text=black!45, anchor=west] at (2.55,0.21) {$\cdots$};
\end{scope}
\draw[ar] (8.70,1.52) -- (6.35,0.97);
\draw[ar] (9.30,1.52) -- (11.60,0.97);
\node[ann] at (8.98,1.16) {or};
\begin{scope}[shift={(5.10,0.50)}]
  \foreach \i in {0,...,11} {
    \draw[fill=colPriv!9, draw=colPriv!35, line width=0.2pt] (\i*0.21,0) rectangle (\i*0.21+0.15,0.42);
    \draw[fill=colPriv!65, draw=colPriv!65] (\i*0.21,0) rectangle (\i*0.21+0.15,0.19);
  }
  \node[font=\tiny, text=black!45, anchor=west] at (2.55,0.21) {$\cdots$};
\end{scope}
\node[ann, anchor=north, align=center] at (6.45,0.30)
  {\textbf{global} variant \citep{rauba2026no}:\\every $\coord$ at one shared level --- the\\
   diagonal $c(\coord) \equiv r$ of $\Cfg$};
\begin{scope}[shift={(10.35,0.50)}]
  \foreach \i/\h in {0/0.42,1/0.42,2/0.13,3/0.42,4/0.42,5/0.42,6/0.42,7/0.26,8/0.42,9/0.42,10/0.42,11/0.42} {
    \draw[fill=colPriv!9, draw=colPriv!35, line width=0.2pt] (\i*0.21,0) rectangle (\i*0.21+0.15,0.42);
    \draw[fill=colPriv!65, draw=colPriv!65] (\i*0.21,0) rectangle (\i*0.21+0.15,\h);
  }
  \node[font=\tiny, text=black!45, anchor=west] at (2.55,0.21) {$\cdots$};
  \node[font=\fontsize{4}{4}\selectfont, text=colSignal!80!black] at (0.495,0.54) {$\blacktriangledown$};
  \node[font=\fontsize{4}{4}\selectfont, text=colSignal!80!black] at (1.545,0.54) {$\blacktriangledown$};
\end{scope}
\node[ann, anchor=north, align=center] at (11.70,0.30)
  {\textbf{mixed} variant (ours): $k \in \{1,2,3\}$\\coordinates at independent levels ---\\
   sparse, off the diagonal};
\end{tikzpicture}
\caption{\small \textbf{Instantiated interface and mixed-rank adaptation.} Left (a): a
control coordinate is one map $W \approx BA$; level $r$ keeps the prefix factors
(green), withdrawing the tail (dashed), so levels nest reachable computation
\eqref{eq:mono}. Later figures draw a coordinate as a bar of its level. Right (b): each step
trains $\ctop$ with one sampled variant. \citet{rauba2026no} draw \emph{global}
variants only (the lattice diagonal); profiles are sparse off-diagonal points; our
\emph{mixed} variant gates one to three coordinates at independent levels (markers).}
\label{fig:mixed-adaptation}\sublabel{fig:mixed-adaptation-a}\sublabel{fig:mixed-adaptation-b}
\end{figure}

\textbf{Mechanism.}
We instantiate Definition~\ref{def:interface} with the nested subspace networks of
\citet{rauba2026deep}, as \citet{rauba2026no} deploy them for least-privilege
inference: there $|\Prin| = 1$, $S_u = \emptyset$, and one scalar rank minimises
privilege subject to a utility floor. We add configurations off that diagonal, and
principals in the plural (App.~\ref{app:instantiation} details the correspondence).
A control coordinate is one of each block's MLP projections, reparameterised as
$W \approx BA$; levels are ranks, and level $r$ keeps the prefix factors
(Fig.~\ref{fig:mixed-adaptation-a}), so with $\theta$ the post-adaptation
parameters, $T_{\ctop}(\theta) = \theta$ holds exactly. On Qwen3-1.7B, ten levels on each
of $|\K| = 84$ coordinates give $|\Cfg| = 10^{84}$. Composition is then exact at the
level of reachable computation: nested factors carry meets to intersected write-spaces and
joins to their sums (Theorem~\ref{thm:capemb}), which extends Theorem~\ref{thm:seccomp}
from outputs to representations. The mechanism supplies
every property the framework consumes --- reachability monotone \eqref{eq:mono},
degradation gradual and \emph{differential}, sensitivity concentrated in few
coordinates, tensor shapes unchanged --- each itemised, with what it buys the
framework, in App.~\ref{app:instantiation}.

\textbf{Adaptation off the Diagonal.\footnote{We share our code here: \href{https://anonymous.4open.science/r/capability-gated-language-models-318D/README.md}{https://anonymous.4open.science/r/capability-gated-language-models-318D/README.md}}}
The recipe of \citet{rauba2026no} trains each step at the anchor $\ctop$ jointly with one sampled
\emph{constant} variant $c(\coord) \equiv r$  (the diagonal of $\Cfg$) while the
profiles of Definition~\ref{def:cgd} are sparse off-diagonal points
(Fig.~\ref{fig:hero-a}). We therefore alternate the global draw with a \emph{mixed}
variant gating $k \in \{1,2,3\}$ random coordinates at independent levels
(Fig.~\ref{fig:mixed-adaptation-b}; loss in
App.~\ref{app:instantiation}), so adaptation trains the states that profile synthesis selects among, not the diagonal alone.

\textbf{What We Do Not Claim.}
Two caveats of \citet{rauba2026no} apply: privilege is an operational proxy, no
level-to-capability map claimed; and gating does not withstand weight adaptation, which
is why \S\ref{sec:formal} grants principals query access only.

\subsection{Profile Synthesis with One-Pass Attribution}
\label{sec:attribution}

Profile synthesis searches $\Cfg$, per principal, for a configuration maximising a
scalarisation of \eqref{eq:alloc},
\begin{equation}
\begin{aligned}
  \mathrm{score}(c) \;=\;{}&
  \overline{\mathrm{acc}}_{\bar{R}_u}(c)
  \;-\; \frac{\lambda}{|S_u|} \sum_{D \in S_u}
  \max\!\big(\mathrm{acc}_D(c),\ \alpha_D\big) \\
  &- \rho \sum_{M \in \bar{R}_u}
  \max\!\big(0,\ \mathrm{acc}_M(\ctop) - \mathrm{acc}_M(c) - \tau\big)
  \;-\; \nu\, \Phi(c),
\end{aligned}
  \label{eq:objective}
\end{equation}
where $\overline{\mathrm{acc}}$ averages accuracy over a set; each term prices a failure
the others cannot see. Suppression counts only down to the chance level $\alpha_D$:
scoring \emph{below} chance requires identifying the correct answer, so an unclamped
term would reward masking over removal (\S\ref{sec:modularity}). The hinge is the
collateral $\gamma_{R_u}$ of \eqref{eq:alloc} at tolerance $\tau$, and $\Phi$ prices
fluency, which no accuracy term registers (construction in App.~\ref{app:objective}).
The search (a measured single-coordinate sweep, then beam search by coordinate
descent, candidates scored by \eqref{eq:objective}) is tractable on a
$10^{84}$-point lattice only because it scores a shortlist, never the lattice; the
shortlist's worth is priced, not assumed, in Fig.~\ref{fig:instrument}b.

The shortlist is where the attribution of \S\ref{sec:formal} enters, and nesting
supplies it in closed form. Gating is exact in the component basis: level $r$ discards
the components $j \ge r$ of
$h_\coord = A_\coord x$, changing its output by
$\Delta o_\coord = -B_\coord[:,r{:}]\,h_\coord[r{:}]$; linearising the correct-answer
margin $m$ in it gives
\begin{equation}
  \Delta m(\coord, r) \;\approx\; \langle g_\coord, \Delta o_\coord \rangle
  \;=\; -\!\!\sum_{j \ge r} a_\coord(j),
  \qquad
  a_\coord(j) \;=\; \sum_{t}
  \big\langle g_{\coord,t},\, B_\coord[:,j] \big\rangle \, h_\coord[t,j],
  \label{eq:attr}
\end{equation}
where $g_\coord = \partial m / \partial o_\coord$ and $t$ ranges over token positions.
The estimator is classical --- gradient-times-activation \citep{shrikumar2017learning,
molchanov2019importance, nanda2023attribution}; what is new is where it sits. At the
factorisation bottleneck the perturbation is the deployed control action rather than a
chosen counterfactual, and since $a_\coord(j)$ is independent of $r$, one
forward--backward pass prices every level of every coordinate: the attribution
\S\ref{sec:formal} requires, at the cost it requires. Being first-order, it screens
rather than selects: the levels the sweep evaluates, plus a rescue quota of coordinates
by predicted selectivity (App.~\ref{app:objective}).
Measured accuracies decide the rest, but not those we report: each benchmark is split
in two before any measurement, the search reads only split~A, and split~B is read once,
for \S\ref{sec:results}.

\section{Results}\label{sec:results}

We study the framework and our instantiation on Qwen3-1.7B
\citep{qwen3technicalreport} and repeat the pipeline verbatim on SmolLM2-1.7B
\citep{allal2025smollm2smolgoesbig}.

\textbf{Protocol.} All headline numbers are read once from held-out
split~B, with the split-A value beside them; accuracies carry Wilson intervals
throughout. The terms of \eqref{eq:objective} (and through it $\Sep$ and $\Ret$ of
\eqref{eq:alloc}) are measured on three datasets. The suppress set $S_u$ is WMDP
\citep{li2024wmdpbenchmarkmeasuringreducing}, all three domains: bio ($1273$ items),
chem ($408$), cyber ($1987$). The retain set $\bar{R}_u$ is seven MMLU subjects
\citep{hendrycks2021measuringmassivemultitasklanguage} pooled into two macros, each
scored as total correct over total items: \emph{adjacent} (college biology,
chemistry, and computer science, one per hazard domain) and \emph{distant} (philosophy, high-school US history, international law, and high-school
macroeconomics). The fluency price $\Phi$, the $1.10\times$ gate, and every perplexity
ratio in the paper are measured on MiniPile
\citep{kaddour2023minipilechallengedataefficientlanguage}, a distillation of the Pile
\mbox{\citep{gao2020pile800gbdatasetdiverse}}. Every benchmark is frozen into splits~A and~B
before any measurement (\S\ref{sec:attribution}). Each substrate is adapted once by the
recipe of \S\ref{sec:instantiation}. The well-formedness check of \citet{rauba2026no} (held-out perplexity stable across the rank ladder where naive SVD truncation collapses) passes under our mixed recipe (Fig.~\ref{fig:e0e1}, App.~\ref{app:instantiation}), and
so does the check that is ours, usability off the diagonal against the number of gated
coordinates (Fig.~\ref{fig:offdiag}).
A domain enters for a substrate only if, on the screening half,
$\ctop$ accuracy clears $\alpha_D$ by 5 points with the Wilson lower bound clear of it,
lest every configuration would vacuously
separate.

\textbf{Instrumental Validation.} We validate the \S\ref{sec:attribution}
instrument first (App.~\ref{subsec:localisationandinstrumentvalidation},
Fig.~\ref{fig:instrument}). One forward--backward pass at $\ctop$ predicts the
drop measured by gating that cell on split~A, confirming \eqref{eq:attr}: sign
agreement and Spearman $\rho > 0.5$, large interventions drifting from the
first-order prediction --- the instrument \emph{screens rather than selects}.
A shortlist ablation over four matched arms (sweep, sweep$+$quota, random,
adversarially least-selective) measures the best \eqref{eq:objective} score
against evaluations: what buys the search is measuring coordinates at
all; attribution buys a better configuration \emph{per evaluation}; the
adversarial arm still reaches sensible suppression for common directions.

\textbf{Compute.}
One run per stage reproduces the results on a single
substrate in about $45$ GPU-hours: $23$ of adaptation on one H100, everything
downstream on one L40S, most of it the per-domain searches. The instrument and
each sublattice measurement run in minutes, and headroom, composition, and the
mechanism checks re-read existing artefacts --- the lattice is evaluated once and
read many ways.

\begin{figure}[tb]
\centering
\includegraphics{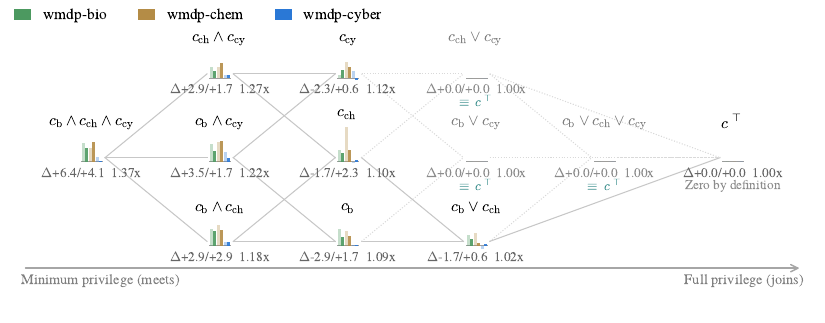}
\vspace{-27pt}
\caption{\small \textbf{Measured sublattice.} Bars give a domain's suppression, taller more removed (pale split~A, solid held-out split~B, each against
split's full privilege) with the
adjacent-macro drop $\Delta$ (A/B) and the perplexity ratio. Three faded joins equal $\ctop$ exactly (parents gate disjoint supports).}
\label{fig:sublattice}
\vspace{-15pt}
\end{figure}

\textbf{Lattice.} Fig.~\ref{fig:sublattice} shows the sublattice generated by the three deployed
profiles under meet and join (twelve nodes); the full access
matrix is Table~\ref{tab:access-matrix} (App.~\ref{app:access-matrix}), the
SmolLM2-1.7B analogue App.~\ref{app:smollm2generalisation}
(Fig.~\ref{fig:sublattice-smollm2}, Table~\ref{tab:crossfamily}). On split~A
separation appears mostly not to survive composition; on split~B the same
nodes reverse it. Retention tells the inverse story: on~A the
single profiles mostly leave the adjacent macro intact and only the meets pay;
on~B the singles already pay and the meets pay more. Joins sit barely below
full privilege, their parents gating disjoint supports
(Theorem~\ref{thm:join}).

\textbf{Profiles and Composition.}
  Fig.~\ref{fig:profiles-and-composition-a} is the search's own record: each
  domain's gate-verified finalists, suppression against adjacent-macro collateral,
  split~A by construction --- the cloud \emph{is} the search. Each domain's
  selected profile (the bootstrapped lower-quartile winner among in-gate
  finalists, not the argmax) is re-measured once on~B: suppression halves or
  worse, cyber reverses sign, and apparent retention gains turn into real
  collateral; the arrows are that gap, and the gap prices the optimism of an
  unprotected search. Fig.~\ref{fig:profiles-and-composition-b} crosses the
  thirteen parents (3 bio, 4 chem, 6 cyber) over all 54 cross-domain pairs and 24
  triples: 180 (meet, parent, domain) records, each meet's suppression on a
  parent's domain minus that parent's own, against the Jaccard overlap of the
  gated supports. The strong sub-additivity of split~A does not survive~B, and
  the registered covariate (shared support should mean interference)
  organises neither half ($r = 0.02$ on~A, $+0.33$ on~B, the opposite sign). Nor does the
  sub-additivity reproduce on two shorter adaptations of the same base model
  (median $\eta$ of $1.94$ and $1.51$ on their own split~A, against $0.90$ here):
  a property of the selection, not of composition.
  Fig.~\ref{fig:profiles-and-composition-b} is the empirical face of
  Theorem~\ref{thm:seccomp} (security composes, under MEA), which puts every
  record at or above zero; held-out~B agrees as a tendency (the SmolLM2 analogue is
  Fig.~\ref{fig:profiles-and-composition-smollm2}).

\begin{figure}[tb]
\centering
\includegraphics[width=0.5122\linewidth]{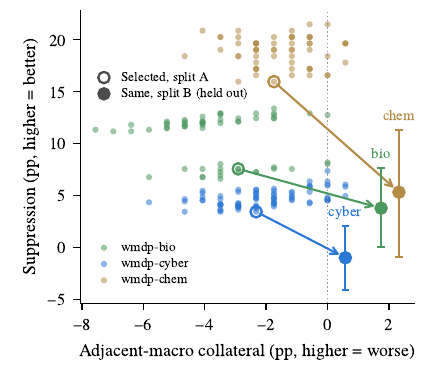}\hfill
\includegraphics[width=0.4878\linewidth]{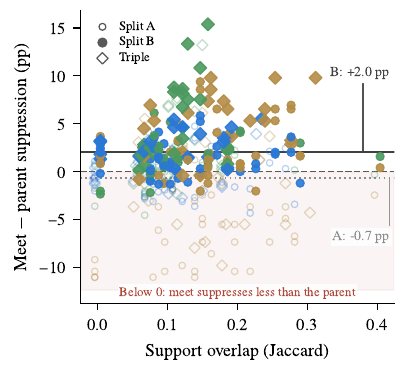}
\vspace{-17pt}
\caption{\small \textbf{Profiles and composition.} (a)~The 200 gate-verified
finalists on split~A: suppression against adjacent-macro collateral, relative
to $\ctop$. Bars are Wilson intervals. (b)~What each meet keeps of its parent's
suppression against gated-support overlap, all 180 (meet, parent, domain)
records; zero is the additive baseline ($\eta = 1$). Sub-additive records fall
from 102/180 on~A to 35/180 on~B.}
\label{fig:profiles-and-composition}\sublabel{fig:profiles-and-composition-a}\sublabel{fig:profiles-and-composition-b}
\vspace{-15pt}
\end{figure}

\textbf{Gates or Filters?}
We proceed to study whether a profile’s suppression is removal of the capability, or masking, using probe protocols from
\citet{hewitt-liang-2019-designing}.
We use the declared linear probe $P$ on the final-layer residual at the answer position, fitted on all of split~A and scored once on split~B, five seeds. Random-label control is the same probe fitted to shuffled labels --- the memorisation floor; readout-cancellation foil is the full-privilege model with the four answer-letter logits permuted after the forward pass --- the
largest masking gap. Fig.~\ref{fig:gate-or-filter-a} compares the behavioural
and representational accuracies along with the \emph{relative masking gap}
$\mu_{\mathrm{rel}}$ (behaviour's drop from $\ctop$ minus the probe's drop)
positive when the answer fell further than the knowledge. chem's probe fails its
capacity floor at full privilege ($34.3\%$ against a behavioural $47.8\%$), so
chem is read only through
$\mu_{\mathrm{rel}}$. The foil is Proposition~\ref{prop:hier}'s converse failure made
  physical, so a held-out $\hat{\mu}$ near the floor supports $\Sepr$ with
  respect to the declared $P$, not merely $\Sepb$. The registered rule (large held-out gap means masking) does not
  fire: on~B the deployed profiles reach $13\%$ (bio), $7\%$ (chem) and $-5\%$
  (cyber) of a pure filter, against $11$, $53$ and $15$ on the split they were
  selected on. At deployable operating points, these gates remove rather than
  mask.
  Fig.~\ref{fig:gate-or-filter-b} draws one dumbbell per grid parent ---
  $\mu_{\mathrm{rel}}$, A$\to$B. Nearly every dumbbell points left on~B (split~A's apparent
  masking was largely fitted to the search half), and every in-gate row sits
  well below its measured foil ceiling. The budget
  cuts one way: masking survives only where fluency is spent freely, and no
  cyber parent clears the gate. That absence is a result, not a blemish: no admissible cyber gate exists under the registered $1.10\times$ budget, and the deployed profile, cheapest breach at $1.12\times$, shows no held-out suppression at all ($-1.0$\,pp on~B, Table~\ref{tab:access-matrix}); App.~\ref{app:cyber-leverage} measures the missing leverage.

\begin{figure}[tb]
\centering
\includegraphics[width=0.5545\linewidth]{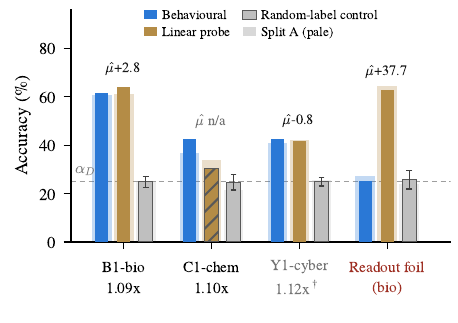}\hfill
\includegraphics[width=0.4455\linewidth]{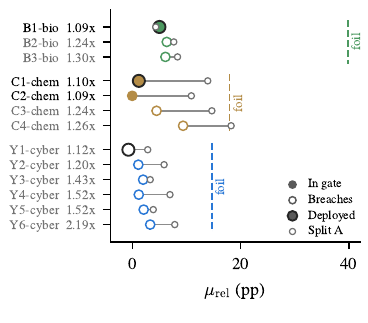}
\vspace{-25pt}
\caption{\small \textbf{Gate or filter.} (a)~Each domain's deployed profile,
with the readout-cancellation foil as the measured ceiling of $\hat{\mu}$; chem
is hatched: read as $\mu_{\mathrm{rel}}$. (b)~$\mu_{\mathrm{rel}}$ for the
thirteen grid parents against each domain's foil ceiling (dashed). No cyber
parent clears the gate; the one held-out masker
spends freely ($1.26\times$, $52\%$ of a pure filter).}
\label{fig:gate-or-filter}\sublabel{fig:gate-or-filter-a}\sublabel{fig:gate-or-filter-b}
\vspace{-15pt}
\end{figure}

\textbf{Mechanism Checks.} We now test three load-bearing steps: monotone elicitation
(Assumption~\ref{ass:mea}), the thresholding mechanism behind
Theorem~\ref{thm:nobound}, and the masking-plane geometry.
Fig.~\ref{fig:mechanism-checks-a} shows the $\mechCells$ (coordinate, level, group)
cells of the single-coordinate sweep. Monotonicity is false pointwise
($\mechViolations$ cells \emph{raise} accuracy, by up to
$+\mechMaxViolation$\,pp --- about a point's magnitude);
Theorem~\ref{thm:seccomp} (security composes) rests on it, so the violations
bound how literally that theorem can be read: small enough that it survives as
a tendency, large enough that nothing forces a meet to preserve suppression
exactly.
Fig.~\ref{fig:mechanism-checks-b} studies per-item margins on the adjacent
  macro at $c_{\mathrm{b}}$, $c_{\mathrm{ch}}$ and
  $c_{\mathrm{b}}{\wedge}c_{\mathrm{ch}}$; margins compose roughly linearly
  (slope $\mechSlope$ on~B), yet $\mechLost$ of the $\mechBoth$ items correct
  under both parents are lost at the meet. This is Theorem~\ref{thm:nobound}'s
  mechanism: accuracy \emph{thresholds} the margin, so two profiles with zero
  collateral can lose items at their meet even when the margin effects are
  exactly additive. The theorem's construction is found in the wild rather
  than built.
  Finally, Fig.~\ref{fig:mechanism-checks-c} places the thirteen parents
  between a pure gate and a pure filter. On~B the deployed profiles sit \mechGapMin{} to
  \mechGapMax{}\,pp from the gate edge; on~A a uniformly positive
  \mechGapMinA{} to \mechGapMaxA{}\,pp, all three inside the wedge. Held-out,
  that reading is withdrawn: only bio stays clearly inside, cyber is
  indistinguishable from a pure gate, and chem sits below, knowledge falling
  faster than its answer; chem's probe caps its claim
  at a gate.

\textbf{Audit.} Last, we ask whether composition can itself detect masking,
pre-registered before any split-B unlock: suppression that
survives the meet is suppression; suppression that evaporates was masking.
Fig.~\ref{fig:composition-audit-a} shows residuals of the meets above chance
on the parent's own domain: an open circle \emph{above} its filled marker
is evaporation. Individual meets scatter both ways (the 35/180 sub-additive records of Fig.~\ref{fig:profiles-and-composition-b}), but every parent's mean recovery is negative (Fig.~\ref{fig:composition-audit-c}) and every domain deepens on average (bio $-2.8$,
chem $-2.8$, cyber $-1.8$\,pp): no parent \emph{systematically} gives suppression back, and the per-parent means leave the detector nothing to predict. Fig.~\ref{fig:composition-audit-b} reads the same profiles by depth
(one linear probe per block; genuine removal would drop probe accuracy at
the gated blocks, masking carries it through) and the knowledge is carried
past every gate, legible at the final layer. Fig.~\ref{fig:composition-audit-c} is the registered
prediction tying the halves: parents that give suppression back should be
those with large masking gaps --- the meet as a masking detector under pure
query access. On~A, strongly confirmed ($\rho = 0.88$, permutation
$p = 10^{-4}$); on~B, $\rho = -0.39$ at $p = 0.19$ over thirteen parents. The split-A result was two
selection-inflated quantities agreeing: masking gaps measured
on the half the probes were fitted to, give-back on the half the profiles were
selected on. We report the failure rather than the split-A number; what
survives is the mechanism, not the detector.

\textbf{Conclusions.}
Capability-gated deployment turns one adaptation into a lattice of models:
per-principal profiles found by search, enforced inside the forward pass, and
audited at its meets and joins. Under the registered A/B
protocol, security composes (every corrected held-out composition effect
is a strengthening, in both lineages)
while utility does not, collateral admitting no compositional bound; at
deployable operating points the gates remove rather than
mask, within the declared probe class; and the pre-registered masking detector failed held-out,
reported as failed. Two split-A findings we retract are
the measure of what an unprotected search protocol can manufacture. We leave open how capability gating behaves under adaptive elicitation and mechanisms beyond nested MLP factors, and how expectation-based separation hardens to a worst-case, differential-privacy-style guarantee. The cyber diagnostics suggest tests of whether the limited leverage is distributed support, rank ordering, or computation outside the gated surface.

\begin{figure}[tb]
\centering
\includegraphics[width=0.3333\linewidth]{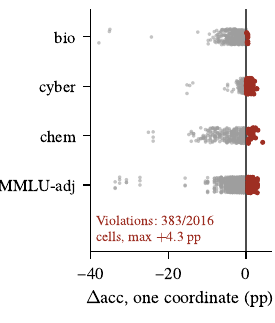}\hfill
\includegraphics[width=0.3333\linewidth]{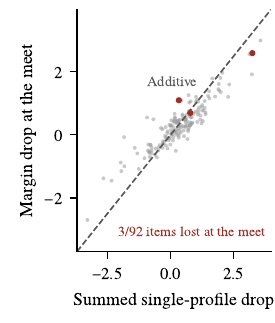}\hfill
\includegraphics[width=0.3333\linewidth]{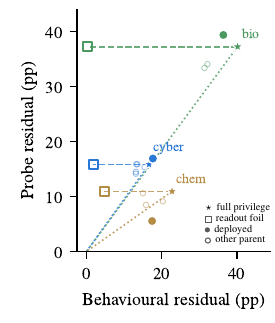}
\vspace{-19pt}
\caption{\small \textbf{Mechanism checks.} (a)~Single-coordinate accuracy
changes from the cached sweep. (b)~Per-item margins at
$c_{\mathrm{b}}{\wedge}c_{\mathrm{ch}}$, a $\mechMeetPPL\times$ meet; marked
items were correct under both parents. (c)~The masking plane; dotted edge a
pure gate, dashed a pure filter; filled markers the deployed profiles, hollow
the remaining parents.}
\label{fig:mechanism-checks}\sublabel{fig:mechanism-checks-a}\sublabel{fig:mechanism-checks-b}\sublabel{fig:mechanism-checks-c}
\vspace{-15pt}
\end{figure}

\begin{figure}[!htb]
\centering
\includegraphics[width=0.32\linewidth]{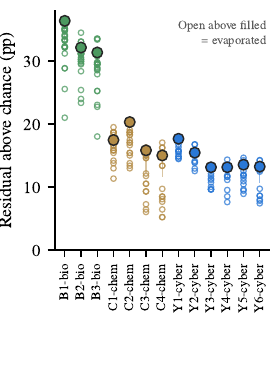}\hfill
\includegraphics[width=0.32\linewidth]{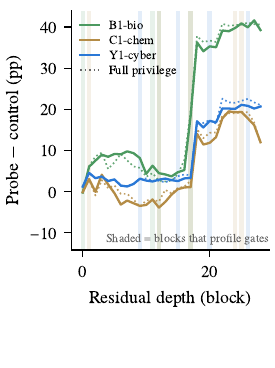}\hfill
\includegraphics[width=0.32\linewidth]{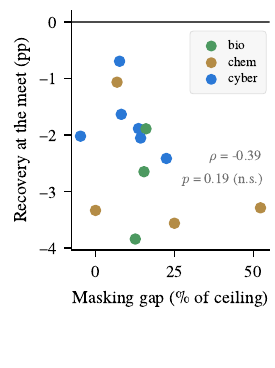}
\vspace{-20pt}
\caption{\small \textbf{Registered audit fails on held-out data.} (a)~Each
parent's own-domain accuracy above chance, alone (filled) and at every meet
containing it (open). (b)~Probe minus control by depth, five-seed means.
(c)~The registered prediction: masking gap against give-back at the meets.}
\label{fig:composition-audit}\sublabel{fig:composition-audit-a}\sublabel{fig:composition-audit-b}\sublabel{fig:composition-audit-c}
\vspace{-19pt}
\end{figure}

\clearpage

\FloatBarrier

\bibliographystyle{plainnat}
\bibliography{references}

\clearpage

\appendix

\section{Proofs}
\label{app:proofs}

The lattice structure of \S\ref{sec:formal} and the accompanying statements for \S\ref{sec:modularity} are collected here, with their proofs.

\begin{theorem}[Lattice structure]\label{thm:lattice}
$(\Cfg, \preceq)$ is a finite distributive lattice, bounded by $\cbot$ and
$\ctop$, with
\begin{equation}
  (c \wedge c')(\coord) = \min\{c(\coord), c'(\coord)\},
  \qquad
  (c \vee c')(\coord) = \max\{c(\coord), c'(\coord)\}.
\end{equation}
\end{theorem}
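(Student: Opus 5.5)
The plan is to reduce everything to the coordinatewise structure of the product $\Cfg = \prod_{\coord \in \K} L_\coord$, where every property can be checked one coordinate at a time. First, each $L_\coord$ is a finite chain, so it is a lattice with $\min$ and $\max$ as meet and join. It is distributive, since any chain is distributive: $\min\{a,\max\{b,c\}\} = \max\{\min\{a,b\},\min\{a,c\}\}$ follows by a short case split on the relative order of $a,b,c$. It is also bounded by its least and greatest elements, which exist because $L_\coord$ is finite and totally ordered (and nonempty, as the definition implicitly requires).

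Next, we show that the pointwise order \eqref{eq:order} is a partial order. Reflexivity, antisymmetry and transitivity each hold coordinatewise in every $L_\coord$, hence for configurations.

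We then verify the meet and join formulas directly. Define $m(\coord) = \min\{c(\coord), c'(\coord)\}$. Then $m \preceq c$ and $m \preceq c'$ coordinatewise. If $d \preceq c$ and $d \preceq c'$, then for each $\coord$ we have $d(\coord) \le c(\coord)$ and $d(\coord) \le c'(\coord)$, so $d(\coord) \le m(\coord)$, that is, $d \preceq m$. Thus $m$ is the greatest lower bound $c \wedge c'$. The argument for $\max$ and the join is dual.

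Distributivity of $(\Cfg,\preceq)$ now follows because both sides of $c \wedge (c' \vee c'') = (c\wedge c') \vee (c \wedge c'')$ are computed coordinatewise by the formulas just established, and the identity holds in each chain $L_\coord$.

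For boundedness, $\cbot$ and $\ctop$ take the least and greatest level at every coordinate, so $\cbot \preceq c \preceq \ctop$ for every $c$. Finiteness holds because $|\Cfg| = \prod_\coord |L_\coord|$ is a finite product of finite numbers, $\K$ being finite.

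No step is a genuine obstacle. The only points needing care are the chain distributivity case split and the tacit nonemptiness of each $L_\coord$, which is needed for $\cbot$ and $\ctop$ to exist. The monotone-reachability requirement \eqref{eq:mono} plays no role here; the lattice structure is purely order-theoretic.
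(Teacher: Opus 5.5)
Your proposal is correct and follows the paper's own proof: both rest on each $L_\coord$ being a finite chain (hence a bounded distributive lattice under $\min$ and $\max$), with meets, joins, distributivity and bounds carried over componentwise to the direct product $\Cfg$. The only difference is that you write out the componentwise checks explicitly (partial order, the greatest-lower-bound argument, nonemptiness of each $L_\coord$ so that $\cbot$ and $\ctop$ exist), which the paper states as a single sentence about direct products.
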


\begin{proof}
Each $L_\coord$ is a finite chain, hence a bounded lattice with $\min$ and $\max$
as meet and join, and every chain is distributive
($\min(x,\max(y,z)) = \max(\min(x,y),\min(x,z))$ holds in any totally ordered
set). $\Cfg$ is the direct product of the $L_\coord$, and both the lattice
identities and distributivity are preserved componentwise under direct products.
Bounds are $\cbot$ and $\ctop$.
\end{proof}

\begin{proposition}[Hierarchy]\label{prop:hier}
If the model's own readout (the map carrying $h_c(x)$ to the emitted answer) belongs
to $P$, then $\Sepr_\delta(c, D) \Rightarrow \Sepb_\delta(c, D)$ for every $c$ and $D$. The
converse fails: behavioural separation at tolerance zero is consistent with probe accuracy
near the full-privilege value.
\end{proposition}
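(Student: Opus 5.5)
The forward implication is a single pointwise step. Let $r_c$ denote the model's own readout: it maps $h_c(x)$ to the answer that $\pi_c$ emits. By definition the behavioural accuracy is $\mathrm{acc}_{\pi_c}(x) = \mathrm{acc}_{r_c}(h_c(x))$. So $\mathrm{acc}_D(c) = \mathbb{E}_{x\sim\mathcal{D}_D}[\mathrm{acc}_{r_c}(h_c(x))]$. Since $r_c \in P$ by hypothesis, this expectation is bounded by the supremum over $P$:
\[
\mathrm{acc}_D(c) \le \sup_{p\in P}\mathbb{E}_{x\sim\mathcal{D}_D}\big[\mathrm{acc}_p(h_c(x))\big] \le \alpha_D + \delta,
\]
where the last inequality is $\Sepr_\delta(c,D)$. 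This is exactly $\Sepb_\delta(c,D)$. One care point is that the readout depends on $c$ only through $h_c$ and the enforced parameters. We need it to lie in $P$ for each $c$ separately, and that is how the hypothesis reads. If decoding is stochastic, $\mathrm{acc}_{\pi_c}(x)$ is already an expectation over the decoding randomness, so we take $P$ to contain randomized readouts and the same chain of inequalities goes through.

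For the converse failure, I will give an explicit witness rather than an existence argument. It is the readout-cancellation foil already used in \S\ref{sec:results}. Take $c = \ctop$, or any configuration whose representation carries the answer. Keep $h_c$ unchanged and compose the readout with a fixed derangement $\sigma$ of the answer letters. If the full-privilege readout is correct with probability $q$ on item $x$, the permuted readout answers correctly only when the original readout outputs $\sigma^{-1}$ of the correct letter. The foil's behavioural accuracy therefore drops, but not automatically to chance.

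The main obstacle is reaching tolerance exactly zero, since a fixed permutation does not guarantee $\mathrm{acc}_D \le \alpha_D$. I would handle it by construction. Let the readout decode the model's top answer $a(x)$ and then emit uniformly at random among the other answers. This gives accuracy $(1-q(x))/(k-1)$ per item, which is at most $1/(k-1)$ and is small whenever $q$ is large. The cleaner option is to choose the readout adversarially: emit a letter the original readout ranks wrong. For a model accurate on $D$, this pushes $\mathrm{acc}_D$ to at most $\alpha_D$, giving $\Sepb_0$.

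Meanwhile $h_c(x) = h_{\ctop}(x)$, so any probe in $P$ that composes the original readout with nothing achieves the full-privilege accuracy. Hence $\sup_P$ stays at its $\ctop$ value, and $\Sepr_\delta$ fails for every $\delta$ below $\mathrm{acc}_D(\ctop) - \alpha_D$. That gap is positive because domains enter the study only if $\ctop$ clears chance by 5 points.
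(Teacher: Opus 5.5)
Your forward implication is the paper's argument: the model's own readout is one member of $P$, so $\mathrm{acc}_D(c)$ is one of the expectations the supremum in $\Sepr_\delta$ ranges over.

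For the converse you take a different route. The paper argues abstractly. It picks a configuration under which the representation still encodes the answer along a direction onto which the readout no longer projects. Behaviour then falls to chance while a probe reading that direction keeps its accuracy, and such configurations exist whenever the readout is not the only separating functional in $P$. You instead build an explicit witness, the readout-cancellation foil of \S\ref{sec:results}: freeze $h_c = h_{\ctop}$ and alter only the map to answer letters.

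What your route buys is a concrete, quantitative statement with no genericity appeal: $\Sepr_\delta$ fails for every $\delta < \mathrm{acc}_D(\ctop) - \alpha_D$. It also matches the paper's remark that the foil is this converse failure made physical. What the paper's route buys is a witness phrased as a configuration of a gated model. It works by \emph{removing} the readout's projection, a restriction in the spirit of \eqref{eq:mono}. Yours is a post-hoc readout edit that is not an element of $\Cfg$; at $c = \ctop$ the policy is fixed to $\pi_\theta$. An anti-correlated readout is also the masking extreme rather than anything a gate does. For the bare logical non-implication between the two predicates, which is all the statement asserts, either witness suffices.

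The tolerance-zero step is looser than it needs to be. Your uniform-over-the-other-letters readout already closes it. Averaged over $D$ it scores $(1-a)/(k-1)$ with $a = \mathrm{acc}_D(\ctop)$. This is $\le 1/k = \alpha_D$ exactly when $a \ge 1/k$, which every admitted domain satisfies.

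The ``cleaner'' adversarial option is actually weaker. Emitting the lowest-ranked letter only guarantees accuracy $\le 1-a$, hence $\le \alpha_D$ only when $a \ge 3/4$. None of the reported full-privilege accuracies reach that ($67.6\%$, $53.4\%$, $43.2\%$).

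Simplest of all, a readout that ignores $h_c$ and emits a uniformly random letter scores exactly $\alpha_D$. That is an extreme case of the paper's ``readout no longer projects onto it'' witness, so the obstacle you flag is not really there.
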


\begin{proof}
Let $p_0 \in P$ be the readout. Then
$\mathrm{acc}_{\pi_c}(x) = \mathrm{acc}_{p_0}(h_c(x))$ by definition of the readout, so
$\mathrm{acc}_D(c) \le \sup_{p \in P} \mathbb{E}\big[\mathrm{acc}_p(h_c(x))\big]
\le \alpha_D + \delta$. For the failure of the converse, take any configuration under which
the representation still encodes the answer along some direction while the readout no
longer projects onto it: probe accuracy stays near the full-privilege value while
behavioural accuracy sits at chance. Such configurations exist whenever the readout is not
the only functional in $P$ separating the classes, which is the generic case.
\end{proof}

\begin{theorem}[Capacity is a lattice embedding]\label{thm:capemb}
For the nested-factor instantiation, write
$\Wsp_\coord(r) = \operatorname{colspan} B_\coord[:, {:}r]$ for the subspace coordinate
$\coord$ can write into at rank $r$, and
$\mathrm{Cap}(c) = \big(\Wsp_\coord(c(\coord))\big)_{\coord \in \K}$, ordered by
componentwise inclusion, with componentwise intersection $\sqcap$ and componentwise sum
$\sqcup$. Then for all $c, c' \in \Cfg$:
\textnormal{(a)}~$c \preceq c' \implies \mathrm{Cap}(c) \sqsubseteq \mathrm{Cap}(c')$;
\textnormal{(b)}~$\mathrm{Cap}(c \wedge c') = \mathrm{Cap}(c) \sqcap \mathrm{Cap}(c')$;
\textnormal{(c)}~$\mathrm{Cap}(c \vee c') = \mathrm{Cap}(c) \sqcup \mathrm{Cap}(c')$.
If every $B_\coord$ has full column rank, $\mathrm{Cap}$ is injective, hence a lattice
isomorphism onto its image.
\end{theorem}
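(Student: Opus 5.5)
The whole argument rests on one fact: for each fixed coordinate, the write-spaces form a nested chain. Since $B_\coord[:, {:}r]$ consists of the first $r$ columns of $B_\coord[:, {:}r']$ whenever $r \le r'$, we have
\[
  \Wsp_\coord(r) \subseteq \Wsp_\coord(r') \qquad (r \le r').
\]
So $r \mapsto \Wsp_\coord(r)$ is an order-preserving map from the chain $L_\coord$ into the subspace lattice of the output space. All three claims are then checked one coordinate at a time, using the pointwise formulas of Theorem~\ref{thm:lattice} together with the componentwise definitions of $\sqsubseteq$, $\sqcap$ and $\sqcup$.

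\textbf{Part (a).} If $c \preceq c'$, then $c(\coord) \le c'(\coord)$ at every coordinate. Nesting gives $\Wsp_\coord(c(\coord)) \subseteq \Wsp_\coord(c'(\coord))$ at every coordinate, which is exactly $\mathrm{Cap}(c) \sqsubseteq \mathrm{Cap}(c')$.

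\textbf{Parts (b) and (c).} Fix $\coord$ and set $r = c(\coord)$, $r' = c'(\coord)$. Because $L_\coord$ is totally ordered, we may assume without loss of generality that $r \le r'$. Then $\min\{r,r'\} = r$ and $\max\{r,r'\} = r'$. By nesting,
\[
  \Wsp_\coord(r) \cap \Wsp_\coord(r') = \Wsp_\coord(r), \qquad \Wsp_\coord(r) + \Wsp_\coord(r') = \Wsp_\coord(r').
\]
Hence $\Wsp_\coord(\min\{r,r'\})$ is the intersection and $\Wsp_\coord(\max\{r,r'\})$ is the sum. Taking all coordinates together gives (b) and (c) via Theorem~\ref{thm:lattice}.

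\textbf{Where the chain structure matters.} This is the only step with any content. For arbitrary subspaces, the intersection and sum are generally not of the form $\Wsp_\coord(\cdot)$. The total order of $L_\coord$, combined with nesting, is what rescues the argument. It also shows that the image of $\mathrm{Cap}$ is closed under $\sqcap$ and $\sqcup$, so it is a sublattice.

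\textbf{Injectivity.} Assume every $B_\coord$ has full column rank. Then the columns of $B_\coord$ are linearly independent, so
\[
  \dim \Wsp_\coord(r) = r.
\]
Thus $r \mapsto \Wsp_\coord(r)$ is strictly increasing, and in particular injective on $L_\coord$.

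Here I identify levels with ranks $r \ge 0$. If level $0$ is allowed, then $\Wsp_\coord(0) = \{0\}$ and nothing changes. If $\mathrm{Cap}(c) = \mathrm{Cap}(c')$, comparing dimensions coordinate by coordinate forces $c(\coord) = c'(\coord)$ for all $\coord$, so $c = c'$.

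The order-embedding property is stronger than injectivity: $\mathrm{Cap}(c) \sqsubseteq \mathrm{Cap}(c')$ should imply $c \preceq c'$. This also follows from dimensions, since $\Wsp_\coord(r) \subseteq \Wsp_\coord(r')$ gives $r \le r'$.

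\textbf{Conclusion.} Combining (b) and (c) with injectivity, $\mathrm{Cap}$ is a bijective lattice homomorphism onto its image, and its inverse preserves order. It is therefore a lattice isomorphism onto that image.
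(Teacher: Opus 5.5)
Your proof is correct and takes the same approach as the paper. Nested column prefixes make each $\{\mathcal{W}_\kappa(r)\}_r$ a chain, comparability turns coordinatewise intersection and sum into the $\min$ and $\max$ levels, and full column rank gives $\dim \mathcal{W}_\kappa(r) = r$, which yields injectivity. Your explicit check that the inverse preserves order goes slightly beyond what the paper writes out; it also follows directly from (b) together with injectivity.
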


\begin{proof}
$B_\coord[:, {:}r]$ is a column prefix of $B_\coord[:, {:}r']$ for $r \le r'$, so
$\Wsp_\coord(r) \subseteq \Wsp_\coord(r')$: each $\{\Wsp_\coord(r)\}_r$ is a chain, which
gives (a). Two subspaces on a chain are comparable, so their intersection is the smaller
of the two, $\Wsp_\coord(\min(r, r')) = \Wsp_\coord\big((c \wedge c')(\coord)\big)$, and
their sum is the larger, $\Wsp_\coord(\max(r, r')) = \Wsp_\coord\big((c \vee
c')(\coord)\big)$; both identities hold at every coordinate, giving (b) and (c). For
injectivity, full column rank makes $\dim \Wsp_\coord(r) = r$, so $\mathrm{Cap}$
determines $c$. Had the factors not been nested (rank reduction selected an
arbitrary subset of components, as SVD truncation of two separately compressed models
would) the two spaces would generally be incomparable and their intersection a proper
subspace of both, so exactness would fail.
\end{proof}

\begin{assumption}[Monotone elicitation (MEA)]\label{ass:mea}
For the task sets concerned,
\[
  c \preceq c' \ \implies\ \mathrm{acc}_D(c) \le \mathrm{acc}_D(c').
\]
\end{assumption}

\begin{theorem}[Security composes, under MEA]\label{thm:seccomp}
Under Assumption~\ref{ass:mea} on $D$,
$\mathrm{acc}_D(c \wedge c') \le \min\{\mathrm{acc}_D(c),\, \mathrm{acc}_D(c')\}$;
consequently $\Sepb_\varepsilon(c, D)$ implies $\Sepb_\varepsilon(c \wedge c'', D)$ for
every $c''$, and likewise for $\Sepr_\delta$ when $P$ is closed under restriction to the
meet's write-spaces.
\end{theorem}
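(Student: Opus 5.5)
The proof runs almost entirely through the order structure of Theorem~\ref{thm:lattice}. First, the meet is a lower bound: by the componentwise formula, $(c\wedge c')(\coord)=\min\{c(\coord),c'(\coord)\}\le c(\coord)$ for every $\coord$, so $c\wedge c'\preceq c$ by \eqref{eq:order}, and symmetrically $c\wedge c'\preceq c'$. Applying Assumption~\ref{ass:mea} on $D$ to each comparison gives $\mathrm{acc}_D(c\wedge c')\le\mathrm{acc}_D(c)$ and $\mathrm{acc}_D(c\wedge c')\le\mathrm{acc}_D(c')$, hence the bound by the minimum. This part is pure monotonicity; no property of the mechanism is used.

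For the behavioural consequence, take $c''$ arbitrary. Since $c\wedge c''\preceq c$, the same step gives $\mathrm{acc}_D(c\wedge c'')\le \mathrm{acc}_D(c)\le \alpha_D+\varepsilon$, which is $\Sepb_\varepsilon(c\wedge c'',D)$. Only the comparison with the separated parent is needed; the tolerance $\varepsilon$ and chance level $\alpha_D$ are carried through unchanged.

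The representational clause is the real obstacle, because Assumption~\ref{ass:mea} speaks only of behavioural accuracy and says nothing about probe suprema. I would read ``$P$ closed under restriction to the meet's write-spaces'' as the hypothesis that supplies the representational analogue of monotonicity. By Theorem~\ref{thm:capemb}(b), $\mathrm{Cap}(c\wedge c'')=\mathrm{Cap}(c)\sqcap\mathrm{Cap}(c'')$, so every write-space available at the meet is contained in the corresponding write-space at $c$. Concretely, I would argue as follows.
\begin{itemize}
\item For each probe $p\in P$ applied to $h_{c\wedge c''}(x)$, closure yields a probe $p'\in P$ that acts on $h_c(x)$ through its restriction to the meet's write-spaces.
\item That restriction should reproduce $p$'s decision, so that $\mathrm{acc}_p(h_{c\wedge c''}(x))\le\mathrm{acc}_{p'}(h_c(x))$ pointwise or in expectation.
\item Taking suprema then gives $\sup_P \mathbb{E}[\mathrm{acc}_p(h_{c\wedge c''})]\le \sup_P\mathbb{E}[\mathrm{acc}_p(h_c)]\le\alpha_D+\delta$.
\end{itemize}

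The delicate point is the second step. Downstream nonlinearities mean $h_{c\wedge c''}$ is not literally a projection of $h_c$, so the argument fails without an extra assumption. I would try first to state explicitly that restriction-closure includes simulating the gated forward pass from the ungated state, i.e.\ that $h_{c\wedge c''}$ is a $P$-admissible function of $h_c$. I would flag this as the assumption under which ``likewise'' holds, rather than derive it from the mechanism.
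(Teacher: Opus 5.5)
Your proposal is correct and follows the paper's proof: the meet lies below both arguments, Assumption~\ref{ass:mea} applied twice gives the minimum bound and $\mathrm{acc}_D(c\wedge c'')\le\mathrm{acc}_D(c)\le\alpha_D+\varepsilon$, and the representational clause comes from Theorem~\ref{thm:capemb}(b) together with the closure hypothesis on $P$. The paper simply asserts that closure of $P$ under ``the induced restriction of its input'' makes the supremum at the meet at most the supremum at $c$. Your observation that $h_{c\wedge c''}$ is not literally a restriction of $h_c$ is correct, and so is your conclusion that the hypothesis must be read as making $h_{c\wedge c''}$ a $P$-admissible function of $h_c$; that is exactly what the paper's step silently requires.
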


\begin{proof}
$c \wedge c' \preceq c$ and $c \wedge c' \preceq c'$ by definition of the meet, so
Assumption~\ref{ass:mea} applied twice gives the inequality; then
$\mathrm{acc}_D(c \wedge c'') \le \mathrm{acc}_D(c) \le \alpha_D + \varepsilon$. For the
representational analogue, Theorem~\ref{thm:capemb}(b) confines every write of the meet to
the intersected write-spaces; if $P$ is closed under the induced restriction of its input,
the supremum over $P$ at the meet is at most the supremum at $c$, and the same two lines
apply.
\end{proof}

\begin{theorem}[Coalitions realise the join]\label{thm:join}
Let a coalition $T$ hold $\{\varphi(v)\}_{v \in T}$,
with each member being allowed to select another member and
return its answer.
Then the coalition attains
$\mathrm{acc}_D \ge \max_{v \in T} \mathrm{acc}_D(\varphi(v))$ for every domain $D$, and
$\bigvee_{v \in T} \varphi(v)$ is the least configuration dominating every member; a
coalition guarantee therefore requires
$\Sepb_\varepsilon\big(\bigvee_{v \in T} \varphi(v),\, D\big)$, which does not follow from
$\Sepb_\varepsilon(\varphi(v), D)$ holding for each $v$ separately.
\end{theorem}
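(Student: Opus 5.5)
The statement makes four claims, and I will prove them in order. The first is a lower bound on what the coalition can reach. The second is a minimality property of the join. The third says that the coalition constraint belongs at the join. The fourth is a counterexample showing that memberwise separation does not imply separation at the join.

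\textbf{Lower bound.} I will model the coalition's power by its selection strategies. A strategy is a map $\sigma$ that sends each query $x$ to a member $\sigma(x) \in T$ and returns $\pi_{\theta,\varphi(\sigma(x))}(x)$. Its accuracy is $\mathbb{E}_{x\sim\mathcal{D}_D}[\mathrm{acc}_{\pi_{\varphi(\sigma(x))}}(x)]$. Each constant strategy $\sigma \equiv v$ attains exactly $\mathrm{acc}_D(\varphi(v))$. Taking the best constant strategy therefore gives $\mathrm{acc}_D \ge \max_{v\in T}\mathrm{acc}_D(\varphi(v))$. A pointwise-adaptive $\sigma$ may do even better, but that only strengthens the inequality.

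\textbf{Minimality of the join.} Let $\bigvee_{v\in T}\varphi(v)$ denote the iterated join. By Theorem~\ref{thm:lattice} it has coordinates $\max_v \varphi(v)(\coord)$. Two facts follow coordinatewise:
\begin{itemize}
\item It dominates every member.
\item Any $c$ that dominates every member satisfies $c(\coord)\ge\max_v\varphi(v)(\coord)$ at every coordinate.
\end{itemize}
Hence the join is the least upper bound.

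\textbf{Why the constraint sits at the join.} The goal is that the coalition's reach is bounded by what is separated at the join. By \eqref{eq:mono}, every member's $\Reach(\varphi(v))$ lies inside $\Reach(\bigvee_v\varphi(v))$, so the coalition's outputs are computations available at the join. Among configurations whose reach contains every member's reach, the join is the least. It is therefore the weakest single configuration at which a separation guarantee covers the whole coalition.

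To connect this with accuracy, I would note one direction. Under Assumption~\ref{ass:mea}, $\Sepb_\varepsilon$ at the join gives $\max_v\mathrm{acc}_D(\varphi(v))\le\alpha_D+\varepsilon$. This is the coalition's constant-selection attainment. The statement's "requires" is a modelling requirement in \eqref{eq:alloc}, not an inequality to be derived, so I will phrase it as motivation rather than overclaim.

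\textbf{Non-implication (the main obstacle).} I need a construction in which each member separates but the join does not. I would take $|\K|=2$ with two levels per coordinate, and let $\varphi(u)=(1,0)$ and $\varphi(v)=(0,1)$, so that their join is $\ctop$. The domain $D$ is chosen so that it requires both gated computations: accuracy sits at $\alpha_D$ unless both coordinates are at full level, where it is high. This is consistent with \eqref{eq:mono} and even with Assumption~\ref{ass:mea}. Each member then satisfies $\Sepb_0$, while the join has accuracy well above $\alpha_D+\varepsilon$.

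The delicate point is to make this an actual model, not just a table of accuracies. I would realise it with a toy two-layer network in which each coordinate's factor carries one of two features, and the answer is their conjunction (an XOR-style readout). The empirical counterpart is the three joins equal to $\ctop$ in Fig.~\ref{fig:sublattice}.
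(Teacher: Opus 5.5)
Your proof is correct. The first two parts match the paper's. For the lower bound, your constant-selection strategy is the oracle-free form of the paper's ``route the query to a member that can answer''. The least-upper-bound property is read off Theorem~\ref{thm:lattice} in both.

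You differ on the final clause. The paper disposes of it in one line: separation is downward closed (Theorem~\ref{thm:seccomp}) but not upward closed, so it ``constrains nothing'' above the members. You instead build an explicit example: two coordinates with two levels each, $\varphi(u)=(1,0)$ and $\varphi(v)=(0,1)$ with join $\ctop$, and a domain that needs both gated computations. Your XOR-type readout is the right realisation, since with independent uniform features any function of one feature sits exactly at chance.

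Your route is the more rigorous of the two. Failure of upward closure does not by itself give failure of closure under joins. Under Assumption~\ref{ass:mea} the separated set is a down-set, and a proper principal ideal is a down-set closed under joins that is still not upward closed. What the clause actually needs is incomparable separated members whose join is not separated. Your example supplies exactly that, while satisfying both \eqref{eq:mono} and Assumption~\ref{ass:mea}. The paper's version buys only brevity.

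Two small corrections to your motivational parts:
\begin{itemize}
\item The sentence ``among configurations whose reach contains every member's reach, the join is the least'' needs $\Reach$ to reflect the order. \eqref{eq:mono} gives only the forward direction. (Theorem~\ref{thm:capemb}'s injectivity would supply order reflection for write-spaces in the nested instantiation.) Phrase the claim via domination instead.
\item The joins equal to $\ctop$ in Fig.~\ref{fig:sublattice} illustrate lost suppression but are not instances of the non-implication. Those members are far from $\Sepb_\varepsilon$ at any small $\varepsilon$.
\end{itemize}
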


\begin{proof}
Any query one member can answer, the coalition answers by routing it to that member and
sharing the output, which gives the lower bound. That $\bigvee_{v \in T} \varphi(v)$ is
the least upper bound is Theorem~\ref{thm:lattice}. Since separation is downward closed
(Theorem~\ref{thm:seccomp}) and not upward closed, its holding at each $\varphi(v)$
constrains nothing at a point above them all.
\end{proof}

\begin{theorem}[No compositional bound on collateral]\label{thm:nobound}
Under Assumption~\ref{ass:mea} on $R$, joins preserve $\Ret_\tau$ and meets obey
$\gamma_R(c \wedge c') \ge \max\{\gamma_R(c),\, \gamma_R(c')\}$; and no informative
converse exists --- any $f$ satisfying
$\gamma_R(c \wedge c') \le f\big(\gamma_R(c), \gamma_R(c')\big)$ across all gated models
has $f(0, 0) = 1$.
\end{theorem}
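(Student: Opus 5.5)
The proof has two parts, following the statement. The monotone half comes directly from Assumption~\ref{ass:mea} applied to $R$, using the order facts of Theorem~\ref{thm:lattice}. The ``no informative converse'' half needs an explicit family of gated models. There $R$ is a single item (or a set of items) and accuracy is the indicator of a positive correct-answer margin. We need two profiles with zero collateral whose meet loses every item, so that any valid $f$ must satisfy $f(0,0)\ge 1$. Since $\gamma_R\le \mathrm{acc}_R(\ctop)\le 1$, the trivial bound $f\equiv 1$ is always valid, so the supremum is attained and $f(0,0)=1$ is forced.

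\textbf{Monotone part.} Since $c\preceq c\vee c'$, MEA gives $\mathrm{acc}_R(c\vee c')\ge \mathrm{acc}_R(c)$. Subtracting from $\mathrm{acc}_R(\ctop)$ gives $\gamma_R(c\vee c')\le \gamma_R(c)\le\tau$, so $\Ret_\tau$ is preserved at joins. It in fact suffices that one parent retains. Symmetrically, $c\wedge c'\preceq c$ and $c\wedge c'\preceq c'$ give $\mathrm{acc}_R(c\wedge c')\le\min\{\mathrm{acc}_R(c),\mathrm{acc}_R(c')\}$, which is the same two-line argument as in Theorem~\ref{thm:seccomp}. Negating and adding $\mathrm{acc}_R(\ctop)$ yields $\gamma_R(c\wedge c')\ge\max\{\gamma_R(c),\gamma_R(c')\}$.

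\textbf{Converse: the construction.} Take two coordinates $\coord_1,\coord_2$, each with levels $\{0,1\}$. The configurations $c=(0,1)$ and $c'=(1,0)$ then have meet $(0,0)=\cbot$ and join $\ctop$. Let the correct-answer margin on every item $x\in R$ be exactly additive in the gated contributions:
\[
m_x(\cdot)=m_0-\textstyle\sum_{i}\mathbf{1}[\coord_i \text{ gated}]\,d,
\qquad
\text{with } 0<d<m_0<2d.
\]
For instance, give each coordinate a rank-one tail component whose write $B_\coord[:,1]$ contributes $d$ to the margin; this fits the nested-factor form, and Theorem~\ref{thm:capemb} plays no role here beyond legitimacy. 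At $\ctop$ the margin is $m_0>0$. At $c$ and at $c'$ it is $m_0-d>0$. At the meet it is $m_0-2d<0$. Hence $\gamma_R(c)=\gamma_R(c')=0$, while $\gamma_R(c\wedge c')=1$. The model also satisfies MEA, because the margins are monotone in the order, so the counterexample does not escape the assumption. Any $f$ valid across all gated models must therefore satisfy $f(0,0)\ge 1$, and combined with the trivial upper bound this gives $f(0,0)=1$.

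\textbf{Main obstacle.} The delicate step is making the construction a legitimate instance of Definition~\ref{def:interface}. That requires $T_{\ctop}(\theta)=\theta$, reachability monotonicity~\eqref{eq:mono}, and a behavioural accuracy that genuinely equals the indicator of a positive margin under the elicitation distribution. I would first realise it with a deterministic argmax readout over two answer options on a point-mass $\mathcal{D}_R$. I would then check that ``across all gated models'' in the statement quantifies over interfaces rather than a fixed $\theta$, since otherwise a single trained model could admit a smaller $f$.
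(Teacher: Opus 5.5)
Your proposal is correct and follows the paper's proof essentially step for step: the join and meet claims come from MEA applied along $c \preceq c \vee c''$ and $c \wedge c' \preceq c, c'$, and the converse uses the same two-coordinate, exactly additive margin construction with $m_0/2 < d < m_0$, which gives zero collateral at each parent and collateral one at the meet. Your explicit check that the counterexample itself satisfies MEA is a worthwhile addition that the paper leaves implicit; like the paper, you read ``$f(0,0)=1$'' as the least admissible value, since the construction by itself only forces $f(0,0)\ge 1$.
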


\begin{proof}
$c \preceq c \vee c''$, so Assumption~\ref{ass:mea} gives
$\mathrm{acc}_R(c \vee c'') \ge \mathrm{acc}_R(c) \ge \mathrm{acc}_R(\ctop) - \tau$: joins
preserve $\Ret_\tau$, and the meet inequality restates Theorem~\ref{thm:seccomp}'s
inequality on $R$. For the absence of an informative converse, let $R$ be scored by the
sign of a margin $m$, let every item carry margin $m_0 > 0$ at $\ctop$, and let $c, c'$
each gate one of two distinct coordinates whose effects on the margin are $-\theta$, with
$m_0/2 < \theta < m_0$. Each configuration alone leaves every margin at $m_0 - \theta > 0$:
all items correct, $\gamma_R(c) = \gamma_R(c') = 0$. Their meet gates both coordinates;
with the effects exactly additive, every margin becomes $m_0 - 2\theta < 0$, so
$\gamma_R(c \wedge c') = 1$. Any admissible $f$ therefore has $f(0, 0) \ge 1$, and since
collateral never exceeds one, $f(0, 0) = 1$: the bound is vacuous precisely where it would
be informative.
\end{proof}

\noindent
In summary, the direction each predicate faces in the lattice decides what survives
composition:

\begin{center}\small
\begin{tabular}{@{}llll@{}}
\toprule
Predicate & Requirement & Survives $\wedge$ (accumulation) & Survives $\vee$ (coalition) \\
\midrule
$\Sepb_\varepsilon$ & security & yes & no \\
$\Sepr_\delta$ & security & yes, for restriction-closed $P$ & no \\
$\Ret_\tau$ & utility & \emph{no bound exists} & yes \\
\bottomrule
\end{tabular}
\end{center}

\clearpage

\FloatBarrier
\section{Substrate, instrument, and extended results}
\label{app:extended}

This appendix collects what the main text defers: the instantiation and adaptation
detail of \S\ref{sec:instantiation}, the substrate and instrument checks, the
objective's constants, a diagnostic of cyber's lower leverage, the access matrix,
and the SmolLM2 replication.

\FloatBarrier
\subsection{Instantiation and substrate validation}
\label{app:instantiation}\label{app:substrate-validation}

\textbf{What the mechanism supplies.}
Monotone reachability \eqref{eq:mono} is the nested subspace property
\citep{rauba2026deep}. After nested adaptation, degradation is gradual where naive
truncation collapses (this makes searching $\Cfg$ feasible at all) and it is
\emph{differential}: hard instances lose first, which is what lets a configuration
below $\ctop$ separate one domain yet retain another \citep{rauba2026no}.
Subject-specific sensitivity concentrates in a few coordinates \citep{rauba2026no},
making the admissible set of Definition~\ref{def:cgd} plausibly non-empty. And tensor
shapes are preserved, so the interface fits a pretrained transformer unchanged
\citep{rauba2026deep}.

\textbf{The adaptation recipe.}
Each step forwards the batch twice, once at the anchor $\ctop$ and once at the
sampled variant, combining the two losses under learned per-level uncertainty
weights \citep{rauba2026no}. A mixed variant is weighted by the mean of its levels'
uncertainties, and global and mixed draws alternate with equal probability
(Fig.~\ref{fig:mixed-adaptation-b}).

\textbf{Relation to prior work.}
The nested subspace networks of \citet{rauba2026deep} set one global rank and name the
layer-specific case as open; in \citet{rauba2026no}, per-coordinate configurations
appear only as a search space. The lattice of \S\ref{sec:formal} is that vector-valued
generalisation, with its algebra --- meets, joins, composition laws. Conditioning the
allocation $\varphi$ on request-time signals would recover the per-request allocator
of \citet{rauba2026no}; the experiments here use static per-principal profiles.

\textbf{The substrate checks.}
Substrate well-formedness and off-diagonal usability parallel the diagnostics of
\citet{rauba2026no}; they are confirmatory preconditions rather than contributions. Held-out
perplexity is stable across the full rank ladder where naive SVD truncation
collapses (Fig.~\ref{fig:e0e1}); degradation against the number of gated coordinates
(Fig.~\ref{fig:offdiag}) is the measured form of the mixed recipe's claim ---
that adaptation trains the states profile synthesis selects from.

\begin{figure}[htb]
\centering
\includegraphics{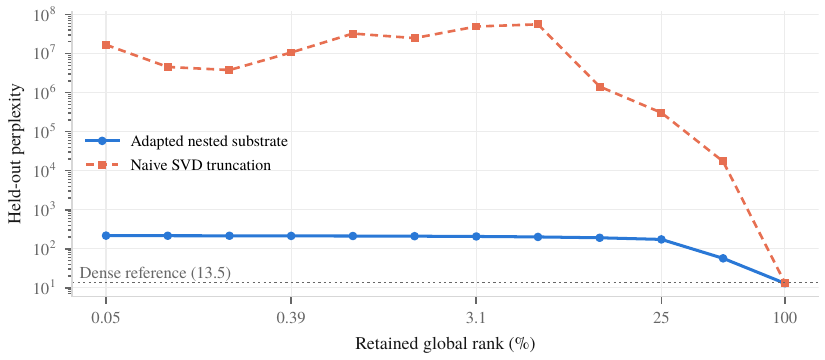}
\caption{\small \textbf{Substrate well-formedness.} Held-out perplexity against
retained global rank on Qwen3-1.7B: the adapted nested substrate, naive SVD
truncation, and the dense reference. Split~A (screening, before any profile
exists).}
\label{fig:e0e1}
\end{figure}

\begin{figure}[htb]
\centering
\includegraphics{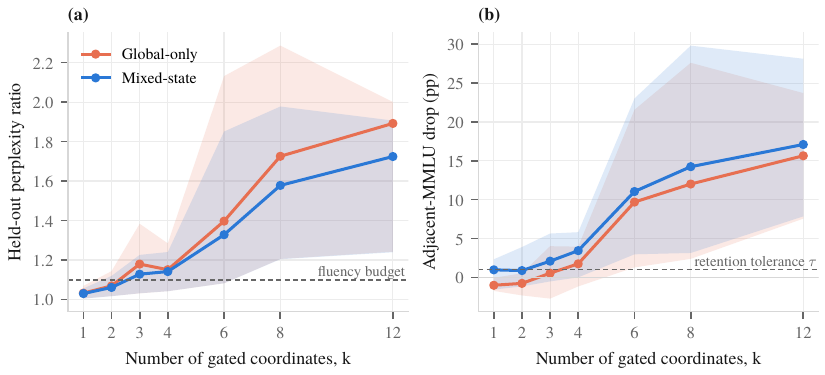}
\caption{\small \textbf{Off-diagonal usability.} Degradation on Qwen3-1.7B against
the number of gated coordinates $k$, global-only against mixed adaptation, levels
sampled as during adaptation: (a)~perplexity ratio against the $1.10$ budget;
(b)~adjacent-MMLU drop against $\tau$. Means over 12 paired draws, ribbons
10th--90th percentiles. Split~A (precondition check).}
\label{fig:offdiag}
\end{figure}

\FloatBarrier
\subsection{Objective and attribution details}\label{app:objective}\label{subsec:localisationandinstrumentvalidation}

\textbf{The objective's terms.}
Suppression is rewarded at weight $\lambda$, clamped at $\alpha_D$, so the term's
optimum is accuracy \emph{at} chance. Retention is measured on $\bar{R}_u$, the
retain set pooled into groups $M$ large enough for the per-task tolerance to be
measurable ($\tau = 0.01$ is half an item on a fifty-item subject); the hinge (the collateral $\gamma_{R_u}$ of \eqref{eq:alloc} made concrete) charges drops
beyond $\tau$ at a rate $\rho$ no suppression can repay. $\Phi(c)$ applies the
linearisation of \eqref{eq:attr} to held-out next-token loss rather than the margin,
predicting each gated coordinate's fluency cost; it sums these predictions over
$\mathrm{supp}(c)$, negatives clipped, at exchange rate $\nu$.

\textbf{Constants.}
As recorded in the configuration of the reported run: $\lambda = 2$
(\texttt{lambda\_suppress}), $\rho = 100$ (\texttt{penalty\_scale}), $\nu = 1$
(\texttt{fluency\_price\_weight}), $\tau = 0.01$ (\texttt{eps\_preserve\_drop}), and
$\alpha_D = 0.25$ for four-option items. The fluency gate, \texttt{max\_ppl\_ratio}
$= 1.10$, sits outside \eqref{eq:objective} as a hard constraint on candidates and is
what makes most configurations ineligible regardless of score. $\Phi(c)$ is recorded as
the sum over $\mathrm{supp}(c)$ of the clipped first-order singleton $\Delta$NLL, so
$\nu$ weights a summed $\Delta$NLL, not a perplexity ratio. The search's winner is
the lower quartile of $B = 1000$ bootstrap resamples of the objective, not the
argmax.

\textbf{Attribution's lineage and use.}
Equation~\eqref{eq:attr} is gradient-times-activation \citep{shrikumar2017learning}
at the factorisation bottleneck --- equivalently the first-order Taylor criterion
for structure removal \citep{molchanov2019importance}, or attribution patching with
a zero baseline \citep{nanda2023attribution, syed2024attribution}. The prediction
picks the levels the measured sweep evaluates and a fixed rescue quota of
coordinates by predicted selectivity --- a set-difference contrast against the
retained tasks, after \citet{wei2024assessing}. Concretely, the search takes $24$
coordinates at four ranks each with a rescue quota of six, sweep fractions
$0.75, 0.5, 0.375, 0.25, 0.125, 0.0625$, and carries $40$ finalists into gate
verification; the shortlist is not a parameter of this stage, being E2's frozen
twelve-coordinate output per strategy, which the search consumes as given. Re-run
against held-out next-token loss, the same pass yields the per-coordinate prices
$\Phi$ sums in \eqref{eq:objective}.

\textbf{Instrument validation.}
The gating identity is exact to fp32, and first-order predictions order measured
drops (Spearman $\rho = 0.61$, sign agreement $0.92$) while increasingly
underestimating them with intervention size --- an ordering signal, never effect
sizes (Fig.~\ref{fig:instrument}a). What steers the search is selectivity,
$\sigma_D(\coord) = \tilde{s}_D(\coord) - \max_{D' \ne D} \tilde{s}_{D'}(\coord)$:
the predicted margin damage to $D$, normalised by $D$'s mean decision margin, minus
the worst damage to any other measured group. Raw importance ranks generically
load-bearing coordinates first, whose gating is a lobotomy;
$\sigma_D(\coord) > 0$ marks $\coord$ as disproportionately $D$'s. The resulting
maps are domain-structured (Fig.~\ref{fig:e2}), though the structure is relative,
not absolute: single-coordinate margin drops are strongly coupled between WMDP
domains and MMLU subjects alike.

\textbf{The shortlist ablation.}
The four arms of Fig.~\ref{fig:instrument}b price each ingredient of the search,
matched in candidate pool, fluency gate, and finalist count: \textbf{combined}
(the default) --- measured sweep plus attribution's rescue quota;
\textbf{measured-only} --- sweep order alone; \textbf{random} --- a random
shortlist; \textbf{anti} --- the least selective coordinates by attribution,
pre-registered to fail. Attribution's rescue
quota buys nothing: measured-only and combined select the identical
three-coordinate chem gate ($0.528 \to 0.380$ at $1.087\times$), the quota costing
$15\%$ more evaluations for the same selection; a bio arm agrees. The controls
calibrate the claim rather than flatter it: a random shortlist reaches $71\%$ of
combined's suppression, but only at $2.8\times$ the budget; \emph{anti}, built from
the least selective coordinates and pre-registered to fail, still reaches
$7.4$\,pp; and random's selection even scores higher on the objective, buying
retention instead of suppression. What buys the search is measuring coordinates at
all; attribution buys a better configuration per evaluation, not a reachable one;
and chem's suppressible directions are common enough that even an adversarial
shortlist reaches them, given budget.

\begin{figure}[htb]
\centering
\includegraphics{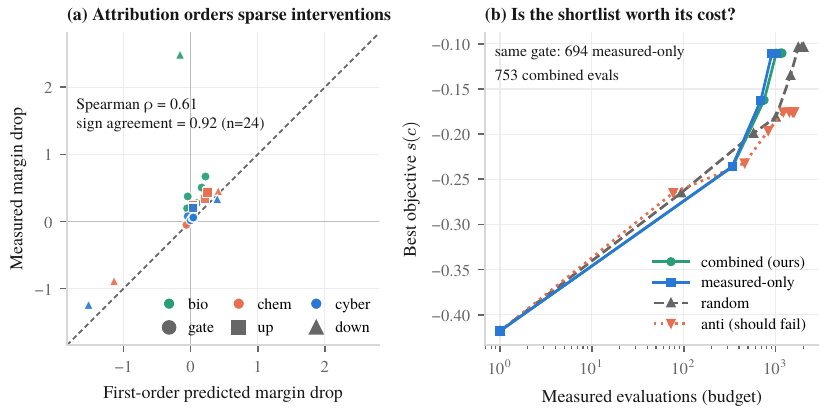}
\caption{\small \textbf{Instrument validation and shortlist ablation.}
(a)~First-order predicted against measured margin drop on the 24-cell validation
subset: the instrument orders interventions, it is not calibrated for effect size.
(b)~Running-best objective against evaluations for four search arms matched in
candidate pool, fluency gate, and finalist count. Combined and measured-only select the identical chem gate, the rescue quota
costing $15\%$ more evaluations; random
needs $2.8\times$ the budget; anti, pre-registered to fail, still reaches
$7.4$\,pp. Two earlier arm designs (a starved candidate pool; a looser $1.25$
screen) are superseded by this matched comparison. Split~A throughout: search
trajectories never read~B.}
\label{fig:instrument}
\end{figure}

\begin{figure}[htb]
\centering
\includegraphics{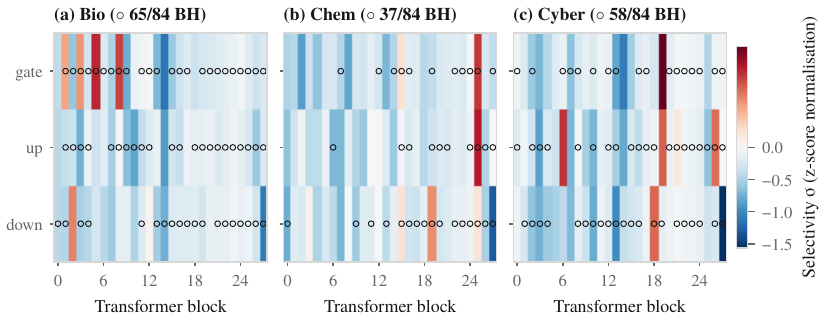}
\caption{\small \textbf{Localisation maps.} Z-scored selectivity $\sigma_D(\coord)$
at active rank 256 over the (block $\times$ projection) grid; circles mark cells
surviving Benjamini--Hochberg correction at $q = 0.05$ over all 252 cells. Split~A:
the instrument that steers the search never reads~B.}
\label{fig:e2}
\end{figure}

\FloatBarrier
\subsection{Why is cyber harder to suppress?}
\label{app:cyber-leverage}

\textbf{The control surface.}
The selected cyber profile removes less capability than its bio and chem
counterparts, but the search alone cannot say why. We
therefore measure the control surface before selection: all $28\times3=84$ MLP
coordinates, each gated alone at every one of the nine non-full ranks, for 756
singleton interventions. Every intervention is evaluated on the same split-A WMDP
items and 256 MiniPile blocks; no target-specific configuration is selected. The
response is the drop in mean answer margin divided by that domain's mean absolute
margin on correctly answered full-privilege items. Split~A is diagnostic here, not a
new confirmation: this question was posed after the split-B profile result was known.

\textbf{The registered contrast.}
At the registered $1.10\times$ fluency cap, 586 interventions are eligible. The
90th-percentile normalised margin drop is $0.197$ for bio, $0.242$ for chem, and
$0.079$ for cyber (Fig.~\ref{fig:singleton-leverage}a). The registered contrast,
$Q_{.90}^{\mathrm{cyber}}-
(Q_{.90}^{\mathrm{bio}}+Q_{.90}^{\mathrm{chem}})/2$, is $-0.141$, with a 95\%
transformer-block bootstrap interval $[-0.163,-0.107]$. Thus a strong cyber
singleton has only $36\%$ of the leverage of the bio--chem average. The result is
not confined to the upper tail: pairing the three domain responses for each identical
eligible intervention gives a mean contrast of $-0.054$,
95\% CI $[-0.068,-0.039]$. Nor does cyber catch up when fluency is relaxed: its
90th percentile remains about $35$--$40\%$ of the bio--chem average from the
$1.02\times$ through $4.00\times$ cumulative caps; the disjoint-band comparison is
negative in every well-populated band (Fig.~\ref{fig:singleton-leverage}b).

\begin{figure}[htb]
\centering
\includegraphics{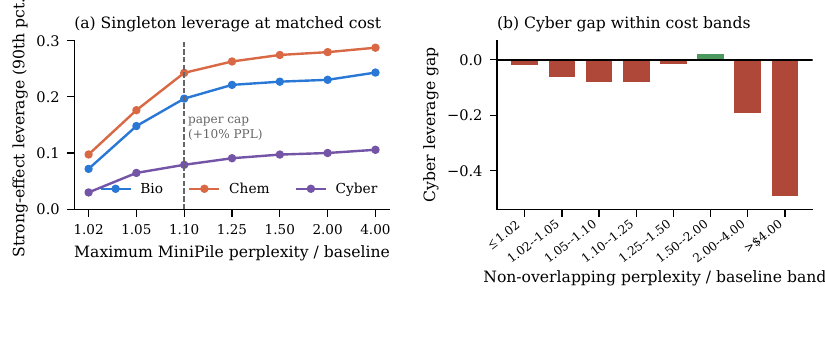}
\caption{\small \textbf{Cyber has lower singleton leverage across the trained rank
ladder.} Each observation gates one MLP coordinate at one rank, all others at full
rank; leverage is the normalised margin drop defined in the text.
(a)~90th-percentile leverage under a cumulative perplexity cap --- a strong, not
single-best, intervention --- with the $1.10\times$ gate dashed. (b)~The same
interventions in disjoint perplexity bands: cyber leverage minus the bio--chem
mean, negative where cyber moves less; the one positive band holds 14 destructive
interventions and alters neither registered comparison. Split~A diagnostic; 756
interventions, uncertainty clustered by transformer block.}
\label{fig:singleton-leverage}
\end{figure}

\textbf{A measurement concern.}
The full-rank model
answers only $43.2\%$ of cyber questions correctly, against $67.6\%$ for bio and
$53.4\%$ for chem. An intervention cannot flip an answer that was already wrong,
so cyber may appear harder to suppress simply because fewer cyber items are known at
baseline. We therefore repeat the comparison using only baseline-correct questions.
To keep the domains comparable, we retain 87 items from each, answer letters
balanced and positions matched in each domain's baseline-margin ordering. For
every in-gate singleton, we ask how much of the matched items' original
decision margin it removes and how often it changes a known answer to a wrong one
(Fig.~\ref{fig:known-item-matching}).

\begin{figure}[htb]
\centering
\includegraphics[width=\linewidth]{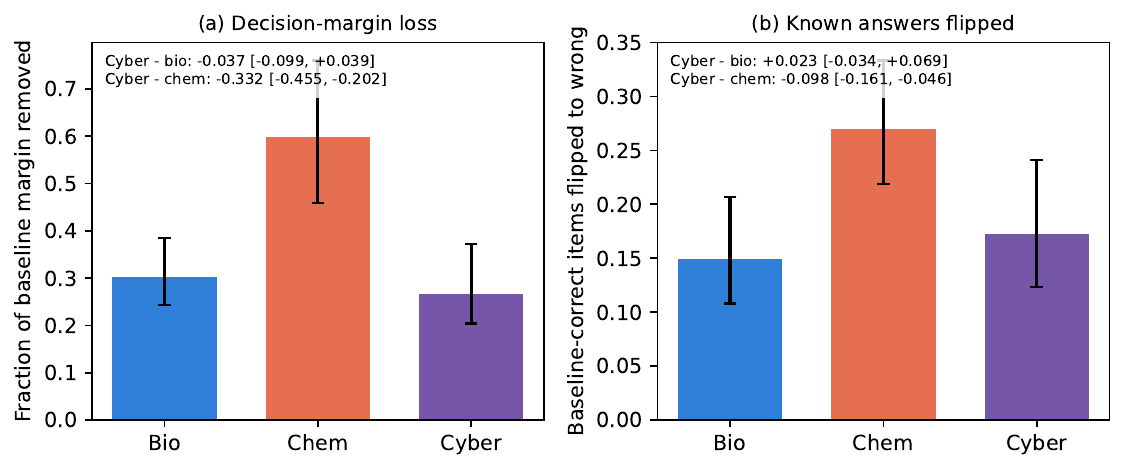}
\caption{\small \textbf{After matching baseline-known questions, cyber responds
like bio; chemistry drives the remaining gap.} Bars: 90th percentile over the 586
in-gate singletons on the matched panel. (a)~Margin loss as a fraction of the
matched domain's baseline margin; (b)~the fraction of baseline-correct answers
flipped. Error bars: 95\% intervals from a crossed bootstrap over transformer
blocks and matched item triplets; insets: pairwise cyber contrasts with 95\%
intervals. Split~A post-hoc diagnostic.}
\label{fig:known-item-matching}
\end{figure}

\textbf{The matched result.}
The interpretation changes. On decision-margin loss, cyber is
$-0.037$ below bio (95\% CI $[-0.099,0.039]$), and on known-answer flips it is
$+0.023$ above bio (95\% CI $[-0.034,0.069]$): neither difference is distinguishable
from zero. Cyber remains below chem on both margin loss ($-0.332$,
95\% CI $[-0.455,-0.202]$) and flips ($-0.098$, 95\% CI
$[-0.161,-0.046]$). Thus lower baseline cyber accuracy explains much of the original
cyber--bio gap, but not the cyber--chem gap.

\textbf{The narrower claim.}
Together, the raw and matched analyses establish a narrower claim than cyber being
intrinsically inseparable. The present MLP-prefix interface has less leverage over
cyber than over the bio--chem average, but this is not a uniform cyber deficit:
baseline-known cyber and bio questions respond similarly, while chemistry is
unusually susceptible. The singleton grid still makes a pure search-failure
explanation unlikely, while leaving four mechanistic possibilities, each with a
separable prediction. \emph{Distributed support} predicts a lower
concentration of positive singleton effects and a gradual catch-up under measured
top-$k$ accumulation. \emph{Rank-order misalignment} predicts cyber sensitivity in
the retained low-index factor bands when non-prefix bands are ablated at matched NLL.
\emph{Computation outside the MLP surface} predicts disproportionate cyber leverage
from attention-head or residual-stream interventions. \emph{Overlap with general
capability} predicts that cyber suppression lies on a worse code/technical-utility
Pareto frontier. The registered WMDP--MMLU correlation test does not support
the broad version of the last account: after removing each intervention's MiniPile
$\Delta$NLL, cyber's mean coupling breadth differs from the bio--chem average by
$-0.032$, 95\% CI $[-0.129,0.053]$. A code-specific retain panel remains the direct
test. Until the band, attention, and accumulation interventions are run, the measured
conclusion is lower interface leverage, not a unique claim about where cyber knowledge
resides.

\begin{figure}[htb]
\centering
\includegraphics[width=\linewidth]{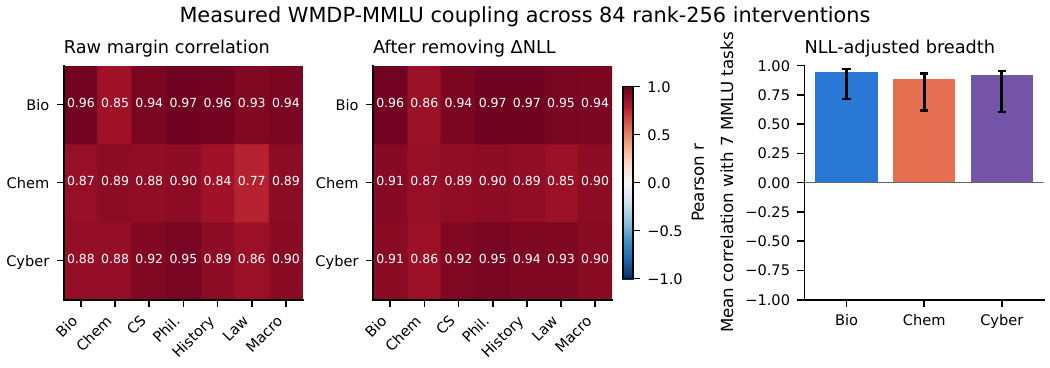}
\caption{\small \textbf{Measured WMDP--MMLU coupling at rank 256.} Pearson
correlations of margin drops across 84 single-coordinate interventions, three WMDP
domains against seven MMLU subjects, before (left) and after (middle) removing each
intervention's MiniPile $\Delta$NLL; right, the mean adjusted correlation with 95\%
block-bootstrap intervals. Coupling is broad for all three domains; the registered
cyber-breadth contrast is indistinguishable from zero. Split~A diagnostic.}
\label{fig:wmdp-mmlu-correlation}
\end{figure}

\FloatBarrier
\subsection{The access matrix}\label{app:access-matrix}

Table~\ref{tab:access-matrix} is the sublattice of Fig.~\ref{fig:sublattice} as a
deployment would consume it: each node a principal against the full metric panel,
on both splits.

\begin{table}[htb]
\centering\small
\caption{\small \textbf{The deployment's access matrix.} Each lattice node as a
principal against the metric panel: per-domain suppression, collateral on both
macros, fluency, and cost as $|\mathrm{supp}(c)|$ with total rank withdrawn.
Measured cells are split~A\,/\,split~B, in points, against that split's own full
privilege (headroom $43.4/40.1$ bio, $27.8/22.8$ chem, $19.2/16.6$\,pp cyber);
fluency and cost are split-independent. Rows ascend in privilege; the three omitted
joins equal $\ctop$ exactly. The single-profile rows are the deployed profiles'
metric panel.}
\label{tab:access-matrix}
\begin{tabular}{@{}lcccccrr@{}}
\toprule
principal & bio & chem & cyber & adj.\ & dist.\ & ppl & dials/rank \\
\midrule
$c_{\mathrm{b}}\wedge c_{\mathrm{ch}}\wedge c_{\mathrm{cy}}$ & 8.6/6.1 & 6.1/9.0 & 2.3/0.3 & 6.4/4.1 & 5.1/3.7 & 1.37$\times$ & 11\,/\,12390 \\
$c_{\mathrm{b}}\wedge c_{\mathrm{ch}}$ & 7.7/6.8 & 9.2/6.9 & 1.7/1.5 & 2.9/2.9 & 3.1/6.0 & 1.18$\times$ & 7\,/\,7782 \\
$c_{\mathrm{b}}\wedge c_{\mathrm{cy}}$ & 8.3/5.0 & 9.2/9.4 & 4.5/1.8 & 3.5/1.7 & 3.1/1.4 & 1.22$\times$ & 8\,/\,8704 \\
$c_{\mathrm{ch}}\wedge c_{\mathrm{cy}}$ & 5.0/3.5 & 4.9/6.9 & 1.4/1.4 & 2.9/1.7 & 4.7/1.8 & 1.27$\times$ & 9\,/\,10342 \\
\midrule
$c_{\mathrm{b}}$ & 7.5/3.8 & 6.7/4.5 & 0.4/0.2 & -2.9/1.7 & 0.4/2.9 & 1.09$\times$ & 4\,/\,4096 \\
$c_{\mathrm{ch}}$ & 5.2/3.9 & 16.0/5.3 & -0.1/0.9 & -1.7/2.3 & 0.2/2.9 & 1.10$\times$ & 5\,/\,5734 \\
$c_{\mathrm{cy}}$ & 1.6/3.8 & 7.4/4.9 & 3.4/-1.0 & -2.3/0.6 & -0.4/-1.9 & 1.12$\times$ & 4\,/\,4608 \\
\midrule
$c_{\mathrm{b}}\vee c_{\mathrm{ch}}$ & 4.9/3.0 & 5.5/1.2 & -1.4/0.8 & -1.7/0.6 & 0.8/4.1 & 1.02$\times$ & 2\,/\,2048 \\
$\ctop$ & 0.0/0.0 & 0.0/0.0 & 0.0/0.0 & 0.0/0.0 & 0.0/0.0 & 1.00$\times$ & -- \\
\bottomrule
\end{tabular}
\end{table}

\FloatBarrier
\subsection{The SmolLM2 generalisation}
\label{app:smollm2generalisation}

\begin{table}[htb]
\centering
\small
\caption{\small \textbf{Cross-family replication.} Each claim in both lineages, on
both halves: $x \to y$ is split~A $\to$ split~B, and suppression is also given as a
fraction of the domain's headroom above chance --- what the objective's floor makes
removable. Split~A reproduces every previously registered value exactly, which is
what licenses reading the split-B column beside it. Provenance differs by
column: the Qwen3 suppression rows quote the granular search arms as previously
registered, not the deployed atoms of Table~\ref{tab:access-matrix} (bio
$7.5 \to 3.8$, chem $16.0 \to 5.3$, cyber $3.4 \to -1.0$); the SmolLM2 rows are
the pre-registered all-selected atoms of Fig.~\ref{fig:sublattice-smollm2}.
Held-out reading costs every
suppression claim roughly half its size in both lineages; chem \emph{inverts} on
SmolLM2; the chem$+$cyber weakening reverses on~B and is withdrawn; utility
non-composition is the only claim that grows on the held-out half, in both.
$\ddagger$:~independently selected parents; a single joint search does reach
in-gate meets. $\S$:~a property of the search,
which never reads split~B. $\P$:~at $1.20\times$ (Qwen3) and $1.30\times$
(SmolLM2); perplexity ratios are split-independent throughout.}
\label{tab:crossfamily}
\begin{tabular}{@{}llll@{}}
\toprule
Claim & Qwen3-1.7B & SmolLM2-1.7B & \\
\midrule
bio suppressible          & $-11.79 \to -8.01$, $27 \to 20\%$ & $-10.06 \to -6.12$, $39 \to 24\%$ & \checkmark \\
chem suppressible         & $-19.02 \to -6.94$, $68 \to 31\%$ & $-7.36 \to \mathbf{+4.08}$, inverts & $\sim$ \\
cyber suppressible$^{\P}$ & $-5.34 \to -1.21$, $28 \to 7\%$ & $-7.85 \to -2.52$, $68 \to 26\%$ & \checkmark \\
no in-gate cyber profile  & cheapest $1.120\times$ & cheapest $>1.10\times$ & \checkmark \\
frontier beats selected$^{\S}$ & 12 of 13 arms & 3 of 3 arms & \checkmark \\
atoms/meets suppress      & $223 \to 186$ of 270 & $44 \to 21$ of 66 & \checkmark \\
no in-gate meet$^{\ddagger}$ & 0 of 24 runs & 0 of 4 runs & \checkmark \\
joins cheap, weak, in-gate & yes & 3 of 3, $1.03$--$1.10\times$ & \checkmark \\
triple strengthens bio    & $-13.21$, 8 $\to$ $-15.38$, 10 runs & survives, 3 runs & \checkmark \\
chem$+$cyber destroys chem & 9 of 24 $\to$ \emph{reverses} & 4 of 4 n.s.\ $\to$ none & --- \\
utility non-composition   & $+6.84 \to +7.00$ distant & $+16.99 \to +17.12$ distant & \checkmark\checkmark \\
\bottomrule
\end{tabular}
\end{table}

\begin{figure}[htb]
\centering
\includegraphics{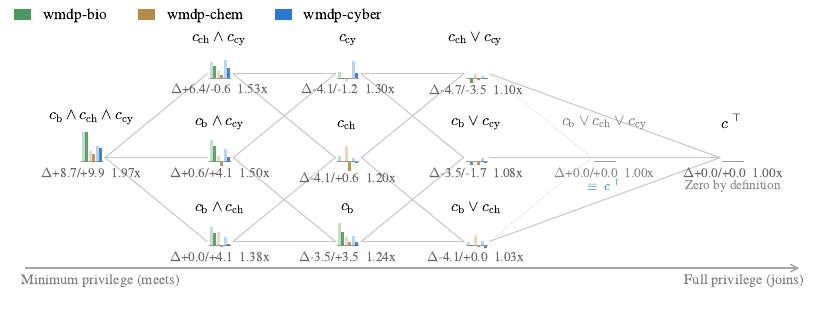}
\vspace{-10pt}
\caption{\small \textbf{The measured sublattice (SmolLM2-1.7B).}
The all-selected run's twelve nodes, drawn exactly as Fig.~\ref{fig:sublattice}
and pinned to its $16$\,pp scale, so heights are directly comparable across
lineages. chem's atom bar runs \emph{below} baseline on~B: the gate raises chem
accuracy in this lineage. The triple strengthens bio; joins are cheap, weak, and
the only in-gate nodes ($1.03$--$1.10\times$), the faded one equal to $\ctop$
exactly, its parents gating disjoint supports. The distant macro, not drawn, is
Theorem~\ref{thm:nobound}'s demonstration in this lineage: no atom moves it by
more than $1.6$\,pp, the triple loses $17.1$ --- individually harmless, jointly
destructive.}
\label{fig:sublattice-smollm2}
\vspace{-10pt}
\end{figure}

\begin{figure}[!htb]
\centering
\includegraphics[width=0.5122\linewidth]{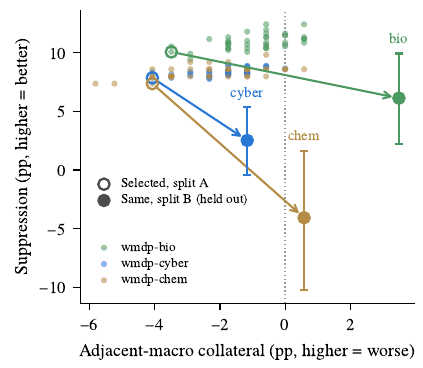}\hfill
\includegraphics[width=0.4878\linewidth]{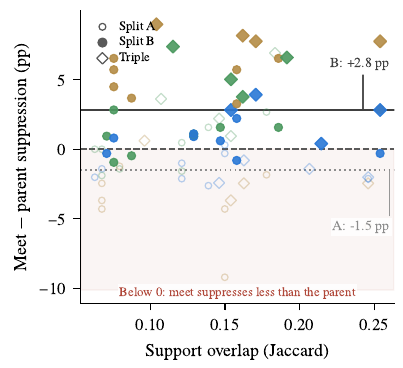}
\vspace{-10pt}
\caption{\small \textbf{Profiles and composition (SmolLM2-1.7B).}
Figs.~\ref{fig:profiles-and-composition-a}
and~\ref{fig:profiles-and-composition-b} redrawn on SmolLM2 from this appendix's
three search arms and four lattice runs. (a)~The $120$ gate-verified finalists,
$40$ per arm. SmolLM2 has no probe run, so the selected markers are read off the
all-selected lattice run's atoms, whose gated supports are identical to the arms'
pre-registered picks --- the values of Table~\ref{tab:crossfamily}'s first three
rows. Every arrow points down and to the right (less suppression, more
collateral), only bio's held-out interval clears zero, and collateral changes
sign for bio and chem: gating \emph{helped} the adjacent macros on the half that
chose it. (b)~All $36$ (meet, parent, domain) records from the four runs. The
reversal reproduces at a quarter of
Qwen3's family size: sub-additive records fall from $25/36$ on~A to $17/36$ on~B, and
the medians move $-1.49 \to +2.82$\,pp ($\eta$ $0.84 \to 1.14$).}
\label{fig:profiles-and-composition-smollm2}\sublabel{fig:profiles-and-composition-smollm2-a}\sublabel{fig:profiles-and-composition-smollm2-b}
\vspace{-10pt}
\end{figure}

\textbf{Design.}
Table~\ref{tab:crossfamily}'s lattice rows pool four SmolLM2 lattice runs. Because each
search arm re-measured $40$ finalists on the reporting set, two parents per domain are
available without any further search: the arm's own pre-registered pick, and the
strongest finalist that leaves both retention macros within $2$\,pp. Crossing two levels of three factors in a
$2^{3-1}$ fractional factorial (bio $\oplus$ chem $\oplus$ cyber $= 0$) covers all twelve
distinct parent pairs in four runs, of which the all-selected run is one, so three
additional runs suffice. Each run emits twelve nodes (three atoms, three pairwise meets,
the triple, three joins, and the triple join) in about seven minutes.

Parents. The all-selected run's three atoms are the nodes
$c_{\mathrm{b}}$, $c_{\mathrm{ch}}$ and $c_{\mathrm{cy}}$ of
Fig.~\ref{fig:sublattice-smollm2}. The frontier parents, which no figure draws, are bio
$-12.42$\,pp at $1.3831\times$, chem $-9.82$ at $1.2225$ and cyber $-8.96$ at $1.2772$. A
frontier parent is selected on raw suppression rather than on the
pre-registered bootstrapped lower quartile, so it is less robust by construction (stable
scores $-2.996 / -1.942 / -1.407$ against $-1.686 / -0.166 / -0.104$); each result records
this per domain in \texttt{profile\_provenance.post\_hoc\_frontier\_pick}, and the
all-selected run carries no such asterisk and is the one quoted in the main text. The
profile-reproduction guard was exact on all four runs (worst deviation $0.0000$\,pp of a
$0.05$ tolerance, over $36$ checks), so every parent transferred into the lattice without
drift.

Corrected verdict over the pooled family: composition, $36$ tests, BH keeps one --- the
triple strengthening bio ($-6.92$\,pp, $p = 10^{-4}$); suppression, $84$ tests deduplicated
to $66$, BH keeps $44$. The chem$+$cyber weakening of chem appears at $+4.29$, $+4.29$,
$+5.52$ and $+9.20$\,pp across the four runs. Four-of-four agreement in sign is not itself a
test at this family size (a sign test alone gives $p = 0.125$), which is why the main text
reports the row as consistent and unresolved.

\FloatBarrier
\end{document}

%% file: figures/fig_mechanism_numbers.tex
\newcommand{\mechViolations}{383}
\newcommand{\mechCells}{2016}
\newcommand{\mechMaxViolation}{4.3}
\newcommand{\mechLost}{3}
\newcommand{\mechBoth}{92}
\newcommand{\mechSlope}{0.66}
\newcommand{\mechGapMin}{$-$2.8}
\newcommand{\mechGapMax}{$+$5.6}
\newcommand{\mechMeetPPL}{1.18}
\newcommand{\mechGapMinA}{$+$2.2}
\newcommand{\mechGapMaxA}{$+$5.9}